\DeclareMathOperator*{\argmin}{arg\,min}
\newtheorem{assumption}{Assumption}
\newtheorem{definition}{Definition}
\newtheorem{proposition}{Proposition}
\newtheorem{theorem}{Theorem}
\newtheorem{lemma}{Lemma}
\newtheorem{claim}{Claim}
\newtheorem{remark}{Remark}
\let\originaleqref\eqref
\renewcommand{\eqref}{\originaleqref}
\newcommand{\bx}{\bm{x}}
\newcommand{\by}{\bm{y}}
\newcommand{\ba}{\bm{a}}
\newcommand{\bw}{\bm{w}}
\newcommand{\bxi}{\bm{\xi}}
\newcommand{\bomega}{\bm{\omega}}
\newcommand{\Vx}{V_{\bx}}
\newcommand{\Vy}{V_{\by}}
\newcommand{\Dx}{D_{\bx}}
\newcommand{\Dy}{D_{\by}}
\newcommand{\nux}{\nu_{\bx}}
\newcommand{\nuy}{\nu_{\by}}
\newcommand{\taux}{\tau_{\bx}}
\newcommand{\tauy}{\tau_{\by}}
\newcommand{\htaux}{\hat{\tau}_{\bx}}
\newcommand{\htauy}{\hat{\tau}_{\by}}
\DeclareMathOperator{\sech}{sech}
\begin{document}

\title{A General Framework for Estimating Preferences Using Response Time Data\thanks{We are grateful to Ian Krajbich for comments on the literature and guidance on empirical applications.}}
\author{Federico Echenique\thanks{Department of Economics, University of California, Berkeley} \and Alireza Fallah\thanks{Department of Computer Science, Rice University} \and Michael I. Jordan\thanks{Departments of Electrical Engineering and Computer Sciences and Statistics, University of California, Berkeley; Inria Paris}}

\date{July 2025}

\maketitle

\sloppy

\abstract{We propose a general methodology for recovering preference parameters from data on choices and response times. Our methods  yield estimates with fast ($1/n$ for $n$ data points) convergence rates when specialized to the popular Drift Diffusion Model (DDM), but are broadly applicable to generalizations of the DDM as well as to alternative models of decision making that make use of response time data. The paper develops an empirical application to an experiment on intertemporal choice, showing that the use of response times delivers predictive accuracy and matters for the estimation of economically relevant parameters.

}

\section{Introduction}

Neoclassical economics identifies preferences with choices. That Alice prefers $x$  to $y$ means only that Alice chooses $x$ when presented with the binary-choice problem ``would you like $x$ or $y$?'' As a consequence of this view, and given the great availability of choice data from various sources, economists have developed and used an extensive methodology for analyzing choice data. Choice data are used in estimating preferences, in predicting agents' behavior out of sample, and in conducting counterfactual analysis. There is, however, a growing acceptance in the profession that preferences could be meaningful beyond their interpretation as choices. Many economists now think that non-choice data have a role to play in the estimation of preferences and the prediction of agents' behavior. Now, while there is substantial interest  in the use of non-choice data, there is not yet a mature statistical methodological body of work designed for working with economic non-choice data. Our paper presents progress towards such a methodology.

We develop empirical methodology to estimate preferences from choice and response-time data. Consider Alice again. Now she is presented with a sequence of pairs of alternatives, and makes a choice from each pair. The time she takes to make each decision is recorded. Such data is commonly available in laboratory experiments (a partial list of early examples is \cite{mosteller1951experimental,wilcox1993lottery,rubinstein2007,gabaix2005bounded,gabaix2006costly}), and more recently in data collected at large scale by tech companies (see e.g.\ \cite{xiang2024combining}). Even experiments that were not designed to analyze response time have time stamps that allow for the recovery of response times.  

Our starting point is a very general approach towards estimating parametric preference models using response time and choices. We propose a loss function that is quadratic in a given class of functions of the model parameters, with coefficients given by the response time and the choice variable: this loss function allows us to learn the preference parameters. Importantly, this method does not require distributional assumptions. It only relies on the family of functions being rich enough to include the ratio of expected decision to the expected response time: what we may term the speed-accuracy ratio.

The basic idea behind our estimation strategy is very simple. If the decision is encoded in a scalar $z$, and the response time is $t$, then the quadratic function $tx^2/2 - zx$ is minimized at $x=z/t$. Here $x$ ranges over the values that a function of the data and the parameters can take. In the actual model, the ratio $z/t$ is the expected decision divided by the expected response time: a magnitude that plays an important role in many models of decision making. When the given class of functions of the data is rich enough to include this ratio (Proposition~\ref{proposition:minimizer}), or to approximate it well, we obtain a consistent estimator of the model's preference parameters.

Our general idea is applied to several models of procedural decision making. The most important model in the literature is the \emph{Drift Diffusion Model} (DDM), a model based on a specific sequential sampling procedure (see \cite{Strzalecki_2025} for an exposition directed to an economics audience). The DDM formulates decision making as a process in which evidence for two choices is accumulated over time until a threshold for one of the choices is reached. The speed at which this evidence is gathered is called the drift rate, and once the accumulated evidence hits a decision boundary, a choice is made. For the DDM, the speed-accuracy ratio can be calculated explicitly and used in our general estimation approach. If we specialize to a linear preference model, we obtain a general finite-sample estimator with $1/n$ convergence (Theorem~\ref{theorem:linear_DDM}) in terms of a matrix norm that is suitable for accurate preference predictions (the matrix norm captures how well we estimate the drift rate). In the general non-linear case, we provide a generalization bound for our loss function in terms of Rademacher complexity (Theorem~\ref{theorem:generalization_DDM}). 

The DDM was developed for problems where the strength of a stimulus could be measured objectively (psychometrically or through neurological methods). The problem is then to estimate a scalar parameter that captures how this intensity affects choice. In economic applications, this amounts to estimating utility differences as a stimulus. Our methods, in contrast, seek to estimate the utility as part of our preference parameter exercise. The DDM is particularly interesting for economics because when we ignore response time data, it reduces to the ubiquitous Logit model of discrete choice. This means that our results on the linear DDM offer a perspective on augmenting economic choice data with response times.

Our methods are applicable to models beyond the basic DDM. First, we consider a generalization of the DDM to random starting positions, the so-called extended DDM. In some experimental paradigms, when tasks are difficult and accuracy is low, errors are often observed to be  faster than correct responses. To accommodate such findings, the DDM was generalized to allow for a random initial time \citep{Ratcliff1998}. We show how our methods may be applied to the extended DDM in Section~\ref{sec:E-DDM}. Second, we consider so-called race models. These are quite different from the DDM, and in particular allow for non-binary choices, but our methods are still applicable. Among race models, we focus on a log-normal version of the ``linear ballistic model'' (Section~\ref{sec:LNR}). Here, again, by calculating the speed-accuracy ratio, we are able to apply our general methodology for estimating preferences. 

Finally, we depart from the sequential sampling family of models and in Section~\ref{sec:perceptron} consider a class of threshold discrete choice models in which we may embed some standard models of random utility in discrete choice.  We argue that such models may be estimated through standard techniques for learning linear classifiers, in particular the perceptron algorithm. It is natural in this setting to relate the separation between the two half-spaces implicit in the linear model with response time, but we show that our approach offers advantages in terms of sample complexity and the stringency of the learning objective.

As an illustration,  we develop an empirical application to data on intertemporal choice from \cite{amasino2019amount}; see Section~\ref{sec:experiments}. For each subject in their experiment, we use 100 choices to train a model and then evaluate the performance by predicting 40 choices out of sample. Our results yield remarkably accurate predictions. The average error rate in choice predictions is 6\%, and we can predict response time quite accurately, with 95\% of observed times falling within the predicted mean $\pm$ one standard deviation. We show that the DDM is superior to the log-normal race model we outlined above and that the use of response time leads to better predictive accuracy than a model trained purely on choice data. The application also illustrates that the use of response-time data makes a difference for the substantive empirical results: The use of response time implies larger estimated discount factors. Since this model also predicts behavior better, its parameter estimates are more credible. The DDM, even without the use of response-time data, is superior to the log-normal race model. That we can compare these models is an advantage of our general approach to learning and recovering preferences.

%%%%%
\paragraph{Related Literature.}
The two papers closest to ours are  \cite{fudenberg2020testing} and \cite{sawarni2025}. Both papers propose new statistical techniques for estimating the DDM, and both papers leverage objects that are connected to what we call the accuracy-speed ratio. We differ from these works in that we propose a general methodology that is not exclusively designed for the DDM, but there are several points of close contact between our paper and theirs, which we proceed to discuss.

\cite{fudenberg2020testing} study a DDM with scalar drift, and propose a methodology to test the model and estimate its parameters. They provide necessary and sufficient conditions under which choice probabilities (conditional on stopping at time) and  stopping time distributions are consistent with a version of the DDM. One of their main insights is that the parameter $\delta$ has a ``sample equivalent,'' which they term a ``revealed drift,''  as well as a revealed boundary. The revealed drift is related to the speed-accuracy ratio that we employ in our model, but differs in using relative entropy of the binary choice instead of the expected choice. They are then able to show that a process is consistent with the DDM if and only if the distribution of stopping times coincides with that predicted by the DDM once we plug in these revealed drifts and boundaries. They estimate choice probabilities non-parametrically using a rich collection of functions of time. This is then plugged into their estimator of revealed drift (together with sample analogues for the stopping time). A similar construction is made for the estimator of revealed boundaries. They then propose a $\chi^2$ test for the DDM based on the difference between model moments and sample analogues. 

\cite{sawarni2025} consider the DDM with preference-dependent drift. Like in our model, they are interested in estimating the parameterized drift that results from a parameterization of agents' preferences, and to this end leverage the ratio of expected choice over expected response time.  This framework is similar to ours, but the authors employ a different estimation strategy, with a ``Neyman-orthogonal loss function'' that is distinct from ours. For the linear specification of the drift, which we study in \cref{theorem:linear_DDM} and where we obtain a convergence rate of $\mathcal{O}(1/n)$, they derive asymptotic convergence-in-distribution results for their proposed estimator.
For a broader class of drift functions, they provide non-asymptotic (finite-sample) guarantees by truncating the response time, bounding the excess risk via a critical-radius argument, and controlling the bias introduced by the
truncation of response times. While the objective functions are not directly comparable, the most closely related result in our work is
\cref{theorem:generalization_DDM}, which achieves similar rates using an argument that bounds
the expected maximum response time across the dataset.

A more distantly related contribution is \cite{alos2021time}, who consider a reduced-form model that connects utility differences to response time in binary-choice problems. They provide a condition under which stochastic choices together with response time reveal a preference for one alternative over another. This paper is mainly concerned with ordinal inferences, meaning that one wants to conclude that one alternative is ranked above another, as in traditional revealed preference theory. They do not develop estimation models for parametric models of preferences, which is the focus of our work. That said, we would expect the methods discussed in Section~\ref{sec:perceptron} to be applicable to the model in \cite{alos2021time}.

Turning to other approaches to estimation with response time, \cite{xiang2024combining} develop an application to field data from an online advertising company. They adapt the DDM to their specific setting, where agents are shown an ad before making a decision, and argue that the use of DDM and response time leads to systematically different estimates than what would obtain from a standard Logit model. The estimation is carried out by using simulated non-linear least squares.

\cite{fudenberg2018speed} study how accuracy depends on time for the DDM model and generalizations. Their basic finding is that the expected accuracy conditional on stopping time has the same kind of monotonicity as the boundary. When the boundary is increasing (constant, decreasing) then the conditional accuracy will be increasing (constant, decreasing). More relevant for us, they propose an empirical strategy to estimate the parameters of a DDM using data on the decision value estimates ($W_t$ in our notation). In their application, using data from \cite{Krajbich2010}, they take the final decision values as given by a pre-experiment questionnaire that was administered to gauge the utility of different alternatives (the drift is given by the difference in reported ``ratings'' for the different objects of choice in the KAR experiment). The estimation is then conducted by maximum likelihood.

\cite{echenique2017response} provide a non-statistical revealed-preference test for a model of choice and response time in which response time is used to recover preference intensity. 

At a very basic modeling level, we treat the problem of learning preferences as a classification problem. This approach has some precedents in the literature: 
\cite{chambers2021recovering} propose methods for recovering preferences from choice data based on results in PAC learning. The formulation of preference estimation as a classification problem in machine learning was first proposed by \cite{basu2020}, who establish the learnability of various models of decision making under uncertainty. \cite{chambers2021recovering} build on this formulation to develop results on sample complexity for a Kemeny-distance-based loss function.

%%%%%%%%%%%%%
We finish our discussion of the literature by giving a brief overview of the role of response time in the economics literature. One use of response time has been to distinguish between deliberate and instinctive responses, in line with the ideas in \cite{kahneman2011thinking}. This literature does not stipulate a computational decision model, but uses response-time data to analyze data collected about specific economic models. \cite{rubinstein2007} advocates for the use of response-time data in economics and argues that it helps distinguish instinctive choices from decisions that require cognitive reasoning. \cite{rubinstein2013} uses response-time data from several standard economic experiments to distinguish deliberate response from mistakes. \cite{agranov2015naive} use constrained time choices (a particular way of capturing response times empirically) to discriminate between naive and sophisticated players in experimental games.
\cite{schotter2021response} argue for the usefulness of response-time data and show it may even be superior to choice data when applied to subsequent prediction of behavior in strategic situations. \cite{CLITHERO201861} provides a survey of the use of response time in economics.

A common procedure in neuroeconomics has been to use a multi-stage experimental design in which, in an initial stage, values or elicited before choices are conducted. A typical example is \cite{milosavljevic2010drift}, who first presents subjects with familiar food items and asks them to rate the items before the actual choice task is started. See also \cite{fehr2011neuroeconomic} for an overview. Our methods obviate the need for such pre-choice tasks, and allow for the estimation of complex preference parameterizations directly from choice and response-time data.

\cite{Krajbich2010} proposed an extension of DDM to account for the endogenous role of attention. The DDM posits that the brain computes a relative decision value that evolves over time, integrating evidence for one item versus the other. The crucial innovation in the attentional DDM is that the drift is not constant but changes dynamically depending on which item is being fixated on at any given moment. \cite{Krajbich2011} extends the attentional DDM beyond binary comparisons, and \cite{Krajbich2012} evaluates the model for economic purchasing decisions.

\section{Problem Formulation}
An agent is offered a choice from a menu $\{\bx,\by\}$ and decides on either $\bx$ or $\by$. To encode the agent's decisions using a variable $z\in\{1,-1\}$, we represent the menu as an ordered pair $(\bx,\by)$. Then we encode the choice of $\bx$ (the ``left'' option) with $z=1$ and of $\by$ (the ``right'' option) with~$-1$.

Suppose that the left alternative is taken from a set $\mathcal{X}\subseteq \mathbb{R}^d$, while the right is from $\mathcal{Y} \subseteq \mathbb{R}^d$ (it is perfectly possible that $\mathcal{X}=\mathcal{Y}$).  We assume these two sets are bounded, and define $D := \sup_{\bx \in \mathcal{X}, \by \in \mathcal{Y}} \|\bx - \by\|$.
For every pair $(\bx, \by) \in \mathcal{X} \times \mathcal{Y}$, let $z(\bx, \by) \in \{1,-1\}$ denote the choice of the agent, which is equal to $1$ if they pick $\bx$ and equal to $-1$ otherwise. Also, let $t(\bx, \by) \in \mathbb{R}_{\geq 0}$ represent the response time of the agent in making their decision between $\bx$ and $\by$. We drop the dependence of functions $z(\cdot, \cdot)$ and $t(\cdot, \cdot)$ on $\bx$ and $\by$ when it is clear by the context.

We assume that pairs of alternatives $(\bx,\by)$ are drawn from a distribution $\mu$ over $\mathcal{X} \times \mathcal{Y}$.
Conditioning on the choices $(\bx,\by)$, and for $z \in \{-1,1\}$, let $p(z;\bx,\by)$ denote the probability of $z(\bx,\by) = z$. Let $F(\cdot;\bx,\by)$ and $f(\cdot;\bx,\by)$ denote the cumulative distribution function (CDF) and the probability density function (PDF) of the distribution of $t(\bx,\by)$, respectively.

For any pair of alternatives $\bx$ and $\by$, we assume that both the distribution of the decision variable and the response time belong to parameterized families of distributions, denoted by $\{p_{\bw}(\cdot;\bx,\by)\}_{\bw}$ and $\{f_{\bw}(\cdot;\bx,\by)\}_{\bw}$, respectively, where $\bw \in \mathcal{W}$ parametrizes each family. We further assume that the true distributions correspond to some universal parameter $\bw^* \in \mathcal{W}$, i.e., $ p(\cdot;\bx,\by)= p_{\bw^*}(\cdot;\bx,\by) $ and $f(\cdot;\bx,\by) = f_{\bw^*}(\cdot;\bx,\by)$.

We assume access to a set of $n$ data points (samples), $\mathcal{S} = {(\bx_i, \by_i, z_i, t_i)}_{i=1}^n$, and our goal is to estimate, or learn, $\bw^*$.

\section{Learning with Response Time} \label{sec:framework}
We design an objective function that enables us to learn the underlying model $\bw^*$ using both the agent’s decisions and response times. Our main idea is to learn the expected accuracy relative to the expected response time. Let us formalize this. For any pair of alternatives $\bx$ and $\by$, we consider a family of functions defined as $\mathcal{G}(\bx,\by) := \{g(\bx,\by;\bw)\}_{\bw}$, parameterized by $\bw \in \mathcal{W}$. We next make the following assumption. 
%%%%%%%%%%%
%%%%%%%%%%%
\begin{assumption}\label{assumption:function_family}
For any pair of alternatives $(\bx,\by) \in \mathcal{X} \times \mathcal{Y}$, we assume that the ratio
\begin{equation} \label{eqn:ratio_choice_time}
\frac{\mathbb{E}[z(\bx,\by)]}{\mathbb{E}[t(\bx,\by)]}
\end{equation}
is finite. Moreover, we assume there exists $\bw^* \in \mathcal{W}$ such that $g(\bx,\by;\bw^*)$ is equal to this ratio.
\end{assumption}
%%%%%%%%%%%
%%%%%%%%%%%
A natural candidate for $\mathcal{G}(\bx,\by)$ is to define $g(\bx,\by;\bw)$ as 
\begin{equation} \label{eqn:ratio_expectations}
\frac{\mathbb{E}_{z \sim p_{\bw}(.;\bx,\by)}[z]}{\mathbb{E}_{t \sim f_{\bw}(.;\bx,\by)}[t]},
\end{equation}
for any $\bw \in \mathcal{W}$, although this is not a required choice; we only need Assumption \ref{assumption:function_family} to hold.
%%%%%%%
Next, we define the following objective function:\footnote{We assume that this expectation is well-defined. This can be ensured by requiring that $g(\bx,\by;\bw)$ is bounded over $\mathcal{X} \times \mathcal{Y} \times \mathcal{W}$, and that the expected response time is bounded for any pair of alternatives.}
\begin{equation} \label{eqn:main_loss}
\mathcal{L}(\bw):= \mathbb{E}_{\bx,\by,z,t}\left[ \frac{t}{2} g(\bx,\by;\bw)^2 - z g(\bx,\by;\bw) \right ],   
\end{equation}
along with its empirical counterpart:
\begin{equation}\label{eqn:main_loss_empirical}
\hat{\mathcal{L}}(\bw):= \frac{1}{n} \sum_{i=1}^n \left ( \frac{t_i}{2} g(\bx_i,\by_i;\bw)^2 - z_i g(\bx_i,\by_i;\bw) \right ).    
\end{equation}
The following proposition illustrates why \eqref{eqn:main_loss} serves as a meaningful objective for our goal: the underlying model $\bw^*$ is its global minimizer, and, moreover, under mild assumptions, it is the unique one.
%%%%%%%%%%%%%%%%%%%%%%%%%%%
%%%%%%%%%%%%%%%%%%%%%%%%%%%
\begin{proposition}\label{proposition:minimizer}
Suppose Assumption \eqref{assumption:function_family} holds. Then, $\bw^*$ is the minimizer of $\mathcal{L}(\cdot)$ over $\mathcal{W}$. Moreover, it is the unique minimizer if, for any other $\bw$, functions $g(\bx,\by;\bw)$ and $g(\bx,\by;\bw^*)$ differ with positive probability over $\mathcal{X} \times \mathcal{Y}$.
\end{proposition}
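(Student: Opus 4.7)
The plan is to condition on the pair $(\bx,\by)$ and reduce the problem to pointwise minimization of a one-dimensional strictly convex quadratic. Writing $\mathcal{L}(\bw) = \mathbb{E}_{\bx,\by}\bigl[ \mathbb{E}_{z,t\mid \bx,\by}[\,\tfrac{t}{2} g(\bx,\by;\bw)^2 - z\, g(\bx,\by;\bw)\,] \bigr]$ and using the fact that $g(\bx,\by;\bw)$ is a deterministic function of $(\bx,\by)$ once $\bw$ is fixed, the inner conditional expectation becomes
\begin{equation*}
h(\bx,\by;\bw) \;:=\; \tfrac{1}{2}\,\mathbb{E}[t\mid \bx,\by]\,g(\bx,\by;\bw)^2 \;-\; \mathbb{E}[z\mid \bx,\by]\,g(\bx,\by;\bw).
\end{equation*}
For fixed $(\bx,\by)$, this is a scalar quadratic in the value $u = g(\bx,\by;\bw)$ with positive leading coefficient $\tfrac{1}{2}\mathbb{E}[t\mid\bx,\by] > 0$ (the response time is non-negative, and by Assumption~\ref{assumption:function_family} the denominator in~\eqref{eqn:ratio_choice_time} is finite and nonzero wherever the ratio is defined; I would note that the boundary case $\mathbb{E}[t\mid\bx,\by]=0$ forces both numerator and ratio to be irrelevant since $t\ge 0$ makes $t=0$ a.s., so the contribution to $\mathcal{L}$ vanishes).

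The unique minimizer of this quadratic in $u$ is $u^\star(\bx,\by) = \mathbb{E}[z\mid\bx,\by]/\mathbb{E}[t\mid\bx,\by]$, which by Assumption~\ref{assumption:function_family} equals $g(\bx,\by;\bw^*)$. Hence
\begin{equation*}
h(\bx,\by;\bw) \;\ge\; h(\bx,\by;\bw^*) \qquad \text{for every } (\bx,\by),\ \bw\in\mathcal{W},
\end{equation*}
and taking expectation over $\mu$ yields $\mathcal{L}(\bw) \ge \mathcal{L}(\bw^*)$, establishing that $\bw^*$ is a global minimizer.

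For uniqueness, I would invoke strict convexity of the quadratic: whenever $g(\bx,\by;\bw)\neq g(\bx,\by;\bw^*)$ one has the strict inequality $h(\bx,\by;\bw) > h(\bx,\by;\bw^*)$. Under the hypothesis that $g(\cdot;\bw)$ and $g(\cdot;\bw^*)$ differ on a $\mu$-positive-measure set, the strict inequality holds on a set of positive $\mu$-measure, so integrating gives $\mathcal{L}(\bw) > \mathcal{L}(\bw^*)$. The only real subtlety is the measure-zero handling of $(\bx,\by)$ pairs where $\mathbb{E}[t\mid\bx,\by]=0$; this is a non-issue because $t\ge 0$ forces $t=0$ almost surely on such a conditional slice, making the inner integrand identically zero regardless of $\bw$, so these points contribute equally to $\mathcal{L}(\bw)$ and $\mathcal{L}(\bw^*)$ and do not affect either inequality. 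No step looks like a serious obstacle; the argument is essentially pointwise completion-of-the-square under Fubini, and the main thing to be careful about is invoking Assumption~\ref{assumption:function_family} precisely to identify $u^\star$ with $g(\bx,\by;\bw^*)$.
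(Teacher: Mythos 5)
Your proposal is correct and follows essentially the same route as the paper: conditioning on $(\bx,\by)$, reducing to a pointwise quadratic in $g(\bx,\by;\bw)$ whose minimizer is the ratio $\mathbb{E}[z\mid\bx,\by]/\mathbb{E}[t\mid\bx,\by]=g(\bx,\by;\bw^*)$, and integrating; the paper packages the same computation as the completed-square identity $\mathcal{L}(\bw)-\mathcal{L}(\bw^*)=\tfrac12\,\mathbb{E}_\mu\bigl[\mathbb{E}[t(\bx,\by)]\,(g(\bx,\by;\bw)-g(\bx,\by;\bw^*))^2\bigr]$. The only cosmetic difference is your handling of the $\mathbb{E}[t\mid\bx,\by]=0$ slice (the paper simply notes this case is excluded because the ratio in Assumption~\ref{assumption:function_family} must be finite), and there your phrase ``inner integrand identically zero'' should really be ``inner conditional expectation is zero, since finiteness of the ratio forces $\mathbb{E}[z\mid\bx,\by]=0$'' --- a negligible imprecision.
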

%%%%%%%%%%%%%%%%%%%%%%%%%%%
%%%%%%%%%%%%%%%%%%%%%%%%%%%
\begin{proof}[Proof of \cref{proposition:minimizer}]
We claim that, for any $\bw \in \mathcal{W}$, we have
\begin{equation} \label{eqn:dif-L}
\mathcal{L}(\bw) - \mathcal{L}(\bw^*) = \frac{1}{2} ~ \mathbb{E}_{(\bx,\by)\sim \mu} \left [
\mathbb{E}[t(\bx,\by)] \left (g(\bx,\by;\bw) - g(\bx,\by;\bw^*) \right)^2
\right].
\end{equation}
Notice that proving this claim would directly show that $\bw^*$ is the minimizer of $\mathcal{L}(\cdot)$. Also, note that given the boundedness of $g(\bx,\by;\bw^*)$, the expected response time is nonzero. Hence, as long as $g(\bx,\by;\bw)$ and $g(\bx,\by;\bw^*)$ differ with positive probability, $\mathcal{L}(\bw) - \mathcal{L}(\bw^*)$ is positive. Therefore, it remains to establish the above claim.

To see this, first, notice that 
\begin{equation} \label{eqn:Loss_double_expectation}
\mathcal{L}(\bw) =  \mathbb{E}_{(\bx,\by)\sim \mu} \left [
\frac{\mathbb{E}[t(\bx,\by)]}{2}g(\bx,\by;\bw)^2 - 
\mathbb{E}[z(\bx,\by)]g(\bx,\by;\bw)
\right].  
\end{equation}
Next, given Assumption \ref{assumption:function_family}, we have
\begin{equation}
\mathbb{E}[z(\bx,\by)] = g(\bx,\by;\bw^*)\mathbb{E}[t(\bx,\by)].     
\end{equation}
Substituting this into \eqref{eqn:Loss_double_expectation}, we obtain
\begin{equation} \label{eqn:Loss_double_expectation_2}
\mathcal{L}(\bw) =  \mathbb{E}_{(\bx,\by)\sim \mu} \left [
\frac{\mathbb{E}[t(\bx,\by)]}{2} 
\left( g(\bx,\by;\bw)^2 - 
2 g(\bx,\by;\bw^*) g(\bx,\by;\bw) \right)
\right].  
\end{equation}
Using this identity, we obtain the above claim. 
\end{proof}
%%%%%%%%%%%%%%%%%%%%%%%%%%%
%%%%%%%%%%%%%%%%%%%%%%%%%%%
Notice that this particular objective does not require knowledge of the distributions $p(\cdot;\bx,\by)$ or $f(\cdot;\bx,\by)$. It essentially only requires selecting a function class that contains the ratio of expectations specified in \eqref{eqn:ratio_choice_time}.
%%%%%%%%%%%%%%%%%%%%%%%%%%%%%%%%
%%%%%%%%%%%%%%%%%%%%%%%%%%%%%%%%
\begin{remark} 
One may ask what happens if Assumption \ref{assumption:function_family} does not hold, and the family of functions $\mathcal{G}(\bx,\by)$ does not include the ratio $\frac{\mathbb{E}[z(\bx,\by)]}{\mathbb{E}[t(\bx,\by)]}$. To address this, note that, regardless of this condition, we can rewrite $\mathcal{L}(\bw)$ as:
\begin{equation} \label{eqn:recast_loss}
\mathcal{L}(\bw) =  \mathbb{E}_{(\bx,\by)\sim \mu} \left [
\frac{\mathbb{E}[t(\bx,\by)]}{2} 
\left( g(\bx,\by;\bw) - 
\frac{\mathbb{E}[z(\bx,\by)]}{\mathbb{E}[t(\bx,\by)]} \right)^2
\right] -  
\mathbb{E}_{(\bx,\by)\sim \mu} \left [
\frac{\mathbb{E}[z(\bx,\by)]^2}{2\mathbb{E}[t(\bx,\by)]}
\right].
\end{equation}
The second term on the right-hand side does not depend on $\bw$, and therefore, minimizing $\mathcal{L}(\cdot)$ is equivalent to minimizing:
\begin{equation} \label{eqn:recast_loss_2}
\mathbb{E}_{(\bx,\by)\sim \mu} \left [
\frac{\mathbb{E}[t(\bx,\by)]}{2} 
\left( g(\bx,\by;\bw) - 
\frac{\mathbb{E}[z(\bx,\by)]}{\mathbb{E}[t(\bx,\by)]} \right)^2
\right].
\end{equation}
Consequently, minimizing $\mathcal{L}(\bw)$ corresponds to finding a parameter $\bw$ such that, with respect to a weighted average over $\bx$ and $\by$, the function $g(\bx,\by;\bw)$ closely approximates the ratio $\frac{\mathbb{E}[z(\bx,\by)]}{\mathbb{E}[t(\bx,\by)]}$.
\end{remark}
%%%%%%%%%%%%%%%%%%%%%%%%%%%%%%%%
%%%%%%%%%%%%%%%%%%%%%%%%%%%%%%%%
In general, the optimization problem in \eqref{eqn:main_loss} may be non-convex. However, in what follows, we discuss how it can still be useful for modeling popular decision-making processes such as drift-diffusion models and race models.
\section{Drift-Diffusion Models} \label{sec:DDM}

The most important procedural model of decision making in the quantitative decision sciences is, arguably, the Drift Diffusion Model (DDM) that we discussed in the introduction. We show that our general methodology yields a method for recovering the parameters of the DDM. 

The DDM assumes a drifted Brownian motion and two boundaries, each representing one of the two choices, and whichever boundary the diffusion process hits first determines the agent’s choice. More formally, consider the following diffusion process:
\begin{equation} \label{eqn:DDM_model_original}
W_t = B_t + v(\bx,\by;\bw^*) t,
\end{equation}
where $B_t$ is a standard Brownian motion (starting at 0) and $v(\bx,\by;\bw)$ is the drift associated with the two alternatives $\bx$ and $\by$, parametrized by $\bw$. Suppose there are two constant boundaries at $b$ and $-b$ for some positive $b$. The agent chooses between the two options depending on which of the two boundaries is hit first. More specifically, let $T^b_+$ denote the first time that the process hits the boundary $b$, i.e.,
\begin{equation} \label{eqn:T_plus_b}
T^b_+ := \inf\{t : W_t = b\},
\end{equation}
with the convention that this could be infinite if the process never hits $b$. Similarly, define $T^b_{-}$ as the first time the process hits the boundary $-b$. The agent chooses the first (second) option $\bx$ ($\by$) if $T^b_+ < T^b_-$ ($T^b_- < T^b_+$). We then define the hitting time
\begin{equation} \label{eqn:T_b}
T^b := \min\{T^b_+, T^b_-\}.
\end{equation}
One can verify that $T^b$ is almost surely finite, meaning that the agent almost surely makes a decision.

Lemma~\ref{lemma:BM_distributions}, stated in Section~\ref{sec:proofs} and proven in the appendix, calculates the distribution of $z(x,y)$ and $t(x,y)$ under the DDM model. Interestingly, the distribution of decision time does not depend on whether the first or second choice is made: under the DDM model, and for a fixed drift and boundary, the distribution of the time it takes to reach a decision is independent of the decision itself. As a consequence, the ratio of expected choice over expected response time, \eqref{eqn:ratio_choice_time} equals the drift rate over the boundary, 
$$\frac{v(\bx,\by;\bw^*)}{b}.$$ 

A natural candidate for the class of functions $\mathcal{G}(\bx,\by)$ in Section~\ref{sec:framework} is therefore to take  $g(\bx,\by;\bw) = \frac{v(\bx,\by;\bw)}{b}$. In this case, the empirical loss $\hat{\mathcal{L}}(\bw)$ is given by
\begin{equation} \label{eqn:empirical_L_DDM}
\hat{\mathcal{L}}(\bw)= \frac{1}{n} \sum_{i=1}^n \left( \frac{t_i}{2} \left( \frac{v(\bx_i,\by_i;\bw)}{b}\right)^2- z_i \frac{v(\bx_i,\by_i;\bw)}{b} \right).    
\end{equation}

The joint distribution of $(z,t)$ is (\cref{lemma:BM_distributions}) given by
\begin{equation}
\frac{1}{2} \exp \left (bzv(\bx,\by;\bw^*) - \frac{v(\bx,\by;\bw^*)^2}{2}t  \right) \varphi(t),    
\end{equation}
and therefore, minimizing $\hat{\mathcal{L}}(\bw)$ in \eqref{eqn:empirical_L_DDM} coincides with finding the maximum likelihood estimator (MLE) of $\bw$, up to a $b^2$ factor (this does not hold in generalizations of the DDM, as we will see shortly in \cref{sec:E-DDM}.)

We discuss quantitative bounds on the approximation of the true parameters in the model for the case when the drift is linear in object attributes. Thus, \begin{equation} \label{eqn:linear_drift}
v(\bx,\by;\bw^*) = (\bx - \by)^\top \bw^*.
\end{equation}
Importantly, the approximation guarantee that we establish below is in terms of a particular matrix norm that is meant to capture the drift component of the DDM. In other words, when we evaluate how far the estimated value of the vector $\bw$ is from the true underlying $\bw$, we use a metric that takes into account that we mean to use these parameters to approximate the drift of the DDM.

The loss function $\hat{\mathcal{L}}(\bw)$ is here a quadratic function of $\bw$. The following result shows how a stochastic gradient descent method can lead to a good approximation of $\bw^*$. The proof can be found in \cref{sec:proofsddm}.
%%%%%%%%%%
%%%%%%%%%%
\begin{theorem} \label{theorem:linear_DDM}
Consider the empirical loss $\eqref{eqn:empirical_L_DDM}$ with the linear drift $\eqref{eqn:linear_drift}$. We run the gradient descent algorithm initialized with an arbitrary $\hat{\bw}_0 \in \mathcal{W}$ and following the update rule:
\begin{equation}
\hat{\bw}_{i}  = \hat{\bw}_{i-1} - \frac{\lambda}{b} \left ( 
t_i \hat{\bw}_{i-1}^\top (\bx_i-\by_i) - z_i \right) (\bx_i-\by_i),
\end{equation}
where $\lambda = 1/(8D^2)$. Then, we have the bound:
\begin{equation}
\mathbb{E} \left[ \left \|\frac{1}{n+1}\sum_{i=0}^{n}\hat{\bw}_i - \bw^* \right \|_{\Sigma}^2 \right] \leq \frac{8}{n+1} \sqrt{1+b^2 D^2\|\bw^*\|^2} \left (\frac{d}{b^2} + 2D^2 \|\hat{\bw}_0 - \bw^*\|^2 \right),   
\end{equation}
where the expectation is taken over the randomness in the draw of the samples, and $\|\cdot\|_\Sigma$ denotes the matrix norm with respect to the matrix $\Sigma := \mathbb{E}_\mu[(\bx - \by)(\bx - \by)^\top]$.
\end{theorem}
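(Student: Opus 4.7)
The plan is to recognize the update as stochastic gradient descent on the quadratic empirical loss~\eqref{eqn:empirical_L_DDM} specialized to the linear drift, and to obtain the $O(1/n)$ rate for the Polyak--Ruppert average in the $\Sigma$-norm by adapting the classical analysis of SGD for least-squares regression.

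First I would write the error recursion. With $e_i := \hat{\bw}_i - \bw^*$ and $u_i := \bx_i - \by_i$, adding and subtracting $\frac{\lambda}{b}\, t_i\, u_i u_i^\top \bw^*$ in the SGD step gives
\begin{equation*}
e_i = \Big(I - \tfrac{\lambda}{b}\, t_i\, u_i u_i^\top\Big)\, e_{i-1} \;-\; \tfrac{\lambda}{b}\, \xi_i\, u_i,
\end{equation*}
where $\xi_i$ collects the residual terms not depending on $e_{i-1}$. By \cref{proposition:minimizer}, the population gradient of $\mathcal{L}$ vanishes at $\bw^*$; using the relation $\mathbb{E}[z \mid \bx,\by] = v(\bx,\by;\bw^*)\,\mathbb{E}[t \mid \bx,\by]/b$ from \cref{lemma:BM_distributions}, this translates into $\mathbb{E}[\xi_i u_i \mid \bx_i,\by_i] = 0$. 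Squaring and taking conditional expectations kills the cross-term and yields the one-step descent inequality
\begin{equation*}
\mathbb{E}\!\left[\|e_i\|^2 \,\middle|\, e_{i-1}\right] \;\leq\; \|e_{i-1}\|^2 \;-\; 2\lambda\,\langle \nabla \mathcal{L}(\hat{\bw}_{i-1}),\, e_{i-1}\rangle \;+\; \lambda^2\,\mathbb{E}\!\left[\|g_i(\hat{\bw}_{i-1})\|^2 \,\middle|\, e_{i-1}\right],
\end{equation*}
where $g_i$ is the stochastic gradient. The choice $\lambda = 1/(8D^2)$, combined with $\|u_i\|\leq D$ and moment control on $t_i$, will be calibrated precisely so that the descent term dominates the variance inflation.

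Next I would pass to Polyak--Ruppert averaging. For the quadratic $\mathcal{L}$, the standard identity $\langle \nabla \mathcal{L}(\bw),\bw-\bw^*\rangle = \|\bw-\bw^*\|_H^2$ holds with the population Hessian $H := \mathbb{E}[(t/b^2)\, u u^\top]$. Summing the descent inequality, telescoping in $i$, and applying Jensen's inequality to the averaged iterate $\bar{\bw}_n := (n+1)^{-1}\sum_{i=0}^n \hat{\bw}_i$ produces a bound of the schematic form $\|\hat{\bw}_0 - \bw^*\|^2/n + d/(b^2 n)$ on $\mathbb{E}\|\bar{\bw}_n - \bw^*\|_H^2$; the $d$ originates from $\mathrm{tr}(H^{-1}\cdot \mathbb{E}[\xi_i^2 u_i u_i^\top])$ at the optimum. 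Finally, because $H$ and $\Sigma/b^2$ are comparable pointwise through $\mathbb{E}[t\mid \bx,\by]$, the $H$-norm bound transfers to a $\Sigma$-norm bound with a multiplicative factor determined by moments of $t$; tracking this factor produces the $\sqrt{1+b^2 D^2\|\bw^*\|^2}$ prefactor in the theorem.

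The main obstacle is that the DDM response time $t_i$ is unbounded, so the textbook bounded-feature SGD analysis does not apply verbatim. I would handle this by replacing the pointwise bound on $t_i$ with $L_2$ moment control obtained from the explicit DDM hitting-time density (\cref{lemma:BM_distributions}): $\mathbb{E}[t_i]$ and $\mathbb{E}[t_i^2]$ are bounded by explicit polynomials in $b$ and the drift magnitude $|v^*|\leq D\|\bw^*\|$, and the $\sqrt{1+b^2 D^2\|\bw^*\|^2}$ factor in the bound is precisely the price of this moment control. A secondary technical point is correctly coupling the Hessian $H$ (natural for the descent step) with the target matrix $\Sigma$ (natural for drift prediction) so that the final bound reads cleanly in $\|\cdot\|_\Sigma$ with the constants stated.
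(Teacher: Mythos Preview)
Your high-level strategy is right and matches the paper's: treat the update as SGD on a least-squares problem, use Polyak--Ruppert averaging to get an $O(1/n)$ rate in the Hessian norm $H=\mathbb{E}[(t/b^2)uu^\top]$, and then transfer to the $\Sigma$-norm via a pointwise lower bound on $\mathbb{E}[t\mid\bx,\by]$. The paper, however, does not redo the SGD analysis from scratch; it invokes the averaged-SGD result of \cite{bach2013non}, which delivers the $\kappa$-free $1/n$ rate provided two structural conditions hold: (i) a noise condition $\mathbb{E}[\bxi\bxi^\top]\preceq\sigma^2 H$ and (ii) a fourth-moment condition $\mathbb{E}[\|\ba\|^2\ba\ba^\top]\preceq R^2 H$. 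All of the DDM-specific work in the paper goes into verifying these with the exact constants $\sigma^2=1/b^2$ and $R^2=2D^2$ (whence $\lambda=1/(4R^2)=1/(8D^2)$).

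The gap in your plan is at the step where you write that ``the $d$ originates from $\mathrm{tr}(H^{-1}\cdot\mathbb{E}[\xi_i^2 u_iu_i^\top])$'' and then move on. Getting this trace to equal $d/b^2$---rather than $d\kappa$ with $\kappa$ the condition number of $H$---is exactly the hard part, and it does not follow from generic ``$L_2$ moment control'' on $t_i$. The paper shows that in fact $\mathbb{E}[\bxi\bxi^\top]=(1/b^2)H$ holds as an \emph{equality}, and proving this requires computing $\mathbb{E}[t^2]$ via the Laplace transform of the DDM hitting time (\cref{lemma:laplace_DDM}) and then verifying a nontrivial hyperbolic identity relating $\mathbb{E}[t]$, $\mathbb{E}[t]^2$, and $\mathbb{E}[t^2]$. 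Similarly, condition~(ii) is obtained from the sharp ratio bound $b^2\,\mathbb{E}[t]/\mathbb{E}[t^2]\geq 0.6$, again derived from the explicit second moment. Your proposal does not surface either calculation, and without them a from-scratch SGD analysis would default to the $d\kappa/n$ or $1/\sqrt{n}$ rates the paper explicitly flags as weaker. The transfer from $H$ to $\Sigma$ you describe is correct and is exactly what the paper does, using $\tanh(\beta)/\beta\geq 1/\sqrt{1+\beta^2}$ to get $\Sigma_1\succeq \Sigma/(2\sqrt{1+b^2D^2\|\bw^*\|^2})$.
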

%%%%%%%%%%
%%%%%%%%%%%%%%%
One might wonder whether off-the-shelf optimization results could be used to derive a straightforward bound on the error of estimating $\bw^*$, given that the objective function is quadratic. It is worth noting, however, that classical optimization results for strongly convex objective functions typically provide a bound of the form $\mathcal{O}(d\kappa/n)$, where $\kappa$ here is proportional to the inverse of the minimum eigenvalue of $\Sigma$. If the minimum eigenvalue is zero, this bound deteriorates to $\mathcal{O}(d/\sqrt{n})$. In contrast, the result we present here does not depend on the minimum eigenvalue of $\Sigma$.

It is also worth highlighting that our result is expressed in terms of the matrix norm with respect to $\Sigma$. We argue that, in this linear DDM model, this is the most relevant metric for evaluating how accurately the underlying model $\bw^*$ is learned, since $\bw^*$ influences the model through the drift rate $(\bx - \by)^\top\bw^*$. Therefore, $\|\hat{\bw} - \bw^*\|_\Sigma^2$ effectively measures the expected error in estimating the drift rate. In \cref{sec:perceptron}, we further discuss how this result is instrumental in deriving high-probability error bounds for predicting the agent’s choices. 

\paragraph{No response-time data}
If we exclusively use choice data, ignoring response times, then the methods we have outlined for the linear DDM reduce to the standard methodology in economics of estimating a Logit discrete choice model \citep{mcfadden1974}. 

\begin{remark} \label{remark:logistic}
The maximum likelihood estimator of the DDM model under a linear drift yields (see \cref{lemma:BM_distributions} of Section~\ref{sec:proofs})  the following logistic regression problem:
\begin{equation} \label{eqn:MLE_with_z}
\argmin_{\bw \in \mathcal{W}} \frac{1}{n} \sum_{i=1}^n \log \left(1 + \exp\left(-2b \bw^\top (\bx_i - \by_i) z_i \right) \right).
\end{equation}
Standard results in the optimization literature allow us to estimate $\bw^*$ at a rate of $\mathcal{O}(d\kappa/n)$, where $\kappa$ is proportional to the inverse of the strong convexity parameter of the objective function at the optimum $\bw^*$ \citep{bach2014adaptivity}. This dependence can be further relaxed under additional assumptions \citep{bach2013non}.
\end{remark}

The logistic regression is useful in an identification strategy when we seek to predict response times, not just choices. \cref{theorem:linear_DDM} allows us to effectively learn $\bw^*/b$ rather than $\bw^*$, which is sufficient to predict choices in the DDM. If we also want to predict response times, then the logistic loss in \eqref{eqn:MLE_with_z} serves to learn $b\bw^*$. By combining these two approaches, we can recover estimates for both $\bw^*$ and $b$.

%%%%%%%%%%%%%%%%%%%%%%%%%%%%%%
%%%%%%%%%%%%%%%%%%%%%%%%%%%%%%
%%%%%%%%%%%%%%%%%%%%%%%%%%%%%%
%%%%%%%%%%%%%%%%%%%%%%%%%%%%%%
%%%%%%%%%%%%%%%%%%%%%%%%%%%%%%
%%%%%%%%%%%%%%%%%%%%%%%%%%%%%%
\subsection{Beyond the Linear Drift \label{sec:Rademacher_DDM}}
In the previous section, we provided theoretical guarantees for the case where the drift function $v(\bx, \by; \bw)$ admits a linear form, as in Equation~\eqref{eqn:linear_drift}. When the drift is not linear, the optimization problem \eqref{eqn:empirical_L_DDM} may become non-convex. Nonetheless, gradient-based methods can still be employed, and there exist guarantees for finding local minima \citep{ge2015escaping, lee2016gradient}, and even global minima under certain structural conditions \citep{kawaguchi2016deep, du2019gradient}.

However, having obtained a solution to the empirical loss \eqref{eqn:empirical_L_DDM}, it remains to study how well this solution will generalize to the out-of-sample loss $\mathcal{L}(\cdot)$. More formally, suppose an optimization algorithm returns $\hat{\bw}$ as a solution to \eqref{eqn:empirical_L_DDM}. What can we say about $\mathbb{E}[\mathcal{L}(\hat{\bw}) - \mathcal{L}(\bw^*)]$? In this section, we address this question in the context of the DDM model. 

Our analysis relies on the Rademacher complexity~\citep{Bartlett}. To define it, let $\bomega = (\omega_i)_{i=1}^n$ denote $n$ independent random variables with the Rademacher distribution; that is, each $\omega_i$ is $1$ with probability $1/2$ and $-1$ with probability $1/2$. The Rademacher complexity of the class of functions $\{v(\cdot, \cdot; \bw)\}_{\bw}$ with respect to a dataset $\tilde{\mathcal{S}} := {(\tilde{\bx}_i, \tilde{\by}_i)}_{i=1}^n$ is defined as:
\begin{equation} \label{eqn:Rademacher_def}
\mathcal{R}_n \left (\{v(\cdot, \cdot; \bw)\}_{\bw} \circ \tilde{\mathcal{S}} \right):= 
\frac{1}{n} \mathbb{E}_{\bomega}\left[ \sup_{\bw \in \mathcal{W}} \sum_{i=1}^n \omega_i ~v(\tilde{\bx}_i,\tilde{\by}_i;\bw) \right],
\end{equation}
where the expectation is taken over the distribution $\bomega$. The Rademacher complexity has been computed for many popular classes of functions. For instance, if we generalize the linear drift model considered earlier and take
\begin{equation}\label{eqn:extended-linear-class}
v(\bx, \by; \bw) = \Psi(\bw^\top [\bx^\top, \by^\top]),    
\end{equation}
where $\bw$ is now a $2d$-dimensional vector and $\Psi(\cdot)$ is an $L$-Lipschitz function, then the Rademacher complexity defined in \eqref{eqn:Rademacher_def} is upper bounded by
\begin{equation}
\frac{L \|\mathcal{W}\|_2 \sqrt{\|\mathcal{X}\|_2^2+\|\mathcal{Y}\|_2^2}}{\sqrt{n}},
\end{equation} 
where $\|\mathcal{W}\|_2 := \sup_{\bw \in \mathcal{W}} \|\bw\|_2$, and $\|\mathcal{X}\|_2$ and $\|\mathcal{Y}\|_2$ are defined analogously~\citep[see, for example,][Chapter 26]{shalev2014understanding}.\footnote{This framework can be further extended to more complex function classes such as neural networks; see, for example, \citet{golowich2018size}.} 
Next, we state our main result, which connects the value of $\mathcal{L}(\cdot)$ at the approximate solution of the empirical objective \eqref{eqn:empirical_L_DDM} to the Rademacher complexity of the class of drift functions defined above.
%%%%%%%%%%%%%%%%%%
%%%%%%%%%%%%%%%%%%
\begin{theorem}\label{theorem:generalization_DDM}
Consider the DDM model in \eqref{eqn:DDM_model_original}, and suppose that
\begin{equation}
\sup_{\bx \in \mathcal{X}, \by \in \mathcal{Y}, \bw \in \mathcal{W}} |v(\bx,\by;\bw)| \leq K.    
\end{equation}
Then, we have the generalization bound:
\begin{equation} \label{eqn:DDM_generalization_bound}
\mathbb{E}_{\mathcal{S}} \left [ \sup_{\bw \in \mathcal{W}} \left(\mathcal{L}(\bw) - \hat{\mathcal{L}}(\bw) \right) \right]  
\leq 2\left( \frac{1}{b}+ CK\log(n) \right) \mathbb{E} \bigg [\mathcal{R}_n \Big (\{v(\cdot, \cdot; \bw)\}_{\bw} \circ (\bx_i,\by_i)_{i=1}^n \Big) \bigg] ,
\end{equation}
for some constant $C$ (independent of the model parameters), where the expectation is taken over the randomness in drawing the sample $\mathcal{S} = {(\bx_i, \by_i, z_i, t_i)}_{i=1}^n$. In particular, this implies that if $\hat{\bw}_{\mathcal{S}}$ is an approximate solution to the empirical objective \eqref{eqn:empirical_L_DDM}, with expected error $\text{Err}$, then
\begin{equation} \label{eqn:DDM_generalization_bound_2}
\mathbb{E}_{\mathcal{S}} \left [ \mathcal{L}(\hat{\bw}_{\mathcal{S}}) - \mathcal{L}(\bw^*) \right]  \leq \text{Err } + 2\left( \frac{1}{b}+ CK\log(n) \right) \mathbb{E} \bigg [\mathcal{R}_n \Big (\{v(\cdot, \cdot; \bw)\}_{\bw} \circ (\bx_i,\by_i)_{i=1}^n \Big) \bigg].   
\end{equation}
\end{theorem}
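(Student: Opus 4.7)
The plan is a standard symmetrization plus Rademacher-contraction argument, with the twist that the quadratic-in-$v$ term $(t/2)(v/b)^2$ has an unbounded multiplier $t$; controlling this is what produces the $\log n$ factor. Write $v_i(\bw):=v(\bx_i,\by_i;\bw)$ and $\ell_i(\bw):=(t_i/2)(v_i(\bw)/b)^2 - z_iv_i(\bw)/b$. First I would apply ghost-sample symmetrization to obtain
\[
\mathbb{E}_{\mathcal{S}}\!\left[\sup_\bw\bigl(\mathcal{L}(\bw)-\hat{\mathcal{L}}(\bw)\bigr)\right] \;\le\; 2\,\mathbb{E}_{\mathcal{S},\bomega}\!\left[\sup_\bw\frac{1}{n}\sum_{i=1}^n \omega_i\ell_i(\bw)\right],
\]
then split $\ell_i$ into its linear piece $-z_iv_i(\bw)/b$ and its quadratic piece $(t_i/(2b^2))v_i(\bw)^2$ and bound the two contributions separately using subadditivity of the sup.

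For the linear piece, $z_i\in\{-1,1\}$ implies that $\omega_iz_i$ are i.i.d.\ Rademacher, so its contribution is exactly $(2/b)\,\mathbb{E}[\mathcal{R}_n(\{v(\cdot,\cdot;\bw)\}_\bw\circ(\bx_i,\by_i)_{i=1}^n)]$; this yields the $1/b$ term inside the parenthesis. For the quadratic piece, observe that $\phi_i(u):=(t_i/(2b^2))u^2$ is $L_i$-Lipschitz on $[-K,K]$ with $L_i:=t_iK/b^2$ and $\phi_i(0)=0$. Applying the one-step (non-uniform) Ledoux--Talagrand contraction coordinate-by-coordinate peels off each $\phi_i$, upper-bounding the Rademacher average by $L_{\max}\cdot\mathcal{R}_n(\{v\}\circ\mathcal{S})$ with $L_{\max}=(K/b^2)\max_it_i$.

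The main obstacle is converting the random factor $\max_it_i$ into the deterministic $\log n$ in the statement. I would do this by truncation: pick $T=\Theta(b^2\log n)$ and split $t_i = \min(t_i,T) + (t_i-T)_+$. The $\min(t_i,T)$ part yields $L_{\max}\le TK/b^2=O(K\log n)$, supplying precisely the $CK\log n$ multiplier of $\mathbb{E}[\mathcal{R}_n]$ in the statement. For the $(t_i-T)_+$ residual I would use the crude deterministic bound $v_i(\bw)^2\le K^2$, so its absolute contribution to the empirical process is at most $(K^2/(2b^2))\cdot n\cdot\mathbb{E}[(t-T)_+]$. Because the explicit first-passage-time density in Lemma~\ref{lemma:BM_distributions} shows $t$ to be sub-exponential (with a rate $\lambda_0\gtrsim 1/b^2$ uniform over $|v|\le K$), $\mathbb{E}[(t-T)_+]$ decays exponentially in $T$; with the chosen $T$ this residual is polynomially small in $n$ and is absorbed into the constant~$C$.

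Finally, inequality~\eqref{eqn:DDM_generalization_bound_2} follows from the standard excess-risk decomposition
\[
\mathcal{L}(\hat{\bw}_{\mathcal{S}})-\mathcal{L}(\bw^*)=\bigl[\mathcal{L}(\hat{\bw}_{\mathcal{S}})-\hat{\mathcal{L}}(\hat{\bw}_{\mathcal{S}})\bigr]+\bigl[\hat{\mathcal{L}}(\hat{\bw}_{\mathcal{S}})-\hat{\mathcal{L}}(\bw^*)\bigr]+\bigl[\hat{\mathcal{L}}(\bw^*)-\mathcal{L}(\bw^*)\bigr]
\]
after taking expectations: the third term vanishes by unbiasedness of $\hat{\mathcal{L}}$ at the fixed point $\bw^*$, the second is at most $\text{Err}$ since $\hat{\mathcal{L}}(\bw^*)\ge\min_\bw\hat{\mathcal{L}}(\bw)$ and $\hat{\bw}_{\mathcal{S}}$ is an $\text{Err}$-approximate empirical minimizer, and the first is controlled by the uniform bound $\mathbb{E}_{\mathcal{S}}[\sup_\bw(\mathcal{L}(\bw)-\hat{\mathcal{L}}(\bw))]$ established in the previous paragraphs.
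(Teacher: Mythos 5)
Your overall architecture matches the paper's: symmetrization to a Rademacher average of the loss class, splitting off the linear term $-z_i v_i(\bw)/b$ (which contributes exactly the $1/b\,\mathcal{R}_n(\{v\})$ term), and a Lipschitz contraction on the quadratic term with data-dependent constant $(K/b^2)\max_i t_i$. Where you genuinely diverge is in converting $\max_i t_i$ into the deterministic $\log n$: you truncate at $T=\Theta(b^2\log n)$ and kill the tail using sub-exponentiality of the hitting time, whereas the paper bounds $\mathbb{E}[\max_i t_i\mid(\bx_i,\by_i)]\le n^{1/q}\,\mathbb{E}[t_0^q]^{1/q}$ with $q=\log n$, after showing (via a monotone-likelihood-ratio argument) that the driftless hitting time $t_0$ stochastically dominates $t(\bx,\by)$, and then computing $\mathbb{E}[t_0^q]\lesssim b^{2q}q^q e^{-q}$ from the Euler-number expansion of $1/\cosh(b\sqrt{2\alpha})$. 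The moment method buys a clean bound $\mathbb{E}[\max_i t_i]\lesssim b^2\log n$ with no leftover term; your truncation buys more elementary tail estimates at the cost of a residual.

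That residual is the one place your argument does not quite deliver the stated inequality. The contribution of $(t_i-T)_+$ is an \emph{additive} term of order $K^2 n^{-c}$ (after choosing the constant in $T$), not a multiple of $\mathbb{E}[\mathcal{R}_n(\{v\}\circ(\bx_i,\by_i)_{i=1}^n)]$, so it cannot simply be ``absorbed into $C$'': the theorem's right-hand side vanishes whenever the Rademacher complexity does, and nothing in your argument forces the residual to vanish with it. To close this you either need to state the bound with an extra additive $O(K^2 n^{-c})$ term, or add an argument lower-bounding $\mathbb{E}[\mathcal{R}_n]$ in the regimes of interest; alternatively, replacing the truncation step with the paper's $L^q$-norm bound on $\max_i t_i$ removes the issue entirely. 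Your proof of \eqref{eqn:DDM_generalization_bound_2} via the three-term excess-risk decomposition is the same as the paper's and is fine.
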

%%%%%%%%%%%%%%%%%%%%%%%%%%%%%%
%%%%%%%%%%%%%%%%%%%%%%%%%%%%%%
See \cref{sec:proofsddm} for the proof.
Note that \cref{theorem:generalization_DDM} implies that any bound on the Rademacher complexity of the class of drift functions yields---up to a $\log(n)$ factor---a bound on the generalization error of the empirical solution of $\hat{\mathcal{L}}(\cdot)$ with respect to the out-of-sample objective $\mathcal{L}(\cdot)$. For example, if we consider the class of drift functions in the form of \eqref{eqn:extended-linear-class}, we obtain a bound of order $\log(n)/\sqrt{n}$.

Finally, it is worth noting that \cref{theorem:generalization_DDM} offers a bound on the expectation of $\mathcal{L}(\hat{\bw}_{\mathcal{S}}) - \mathcal{L}(\bw^*)$, though our interest might lie more directly in the difference between $\hat{\bw}_{\mathcal{S}}$ and $\bw^*$. Since $\bw^*$ influences the DDM model exclusively via $v(\bx,\by;\bw^*)$, it becomes essential for learning $\bw^*$ to assume that $v(\cdot, \cdot;\bw^*)$ is distinguishable from $v(\cdot, \cdot;\bw)$ for other values of $\bw$. This is formalized in the following lemma whose proof is deferred to the appendix:
%%%%%%%%%%%%%
\begin{lemma}\label{lemma:generalization-DDM}
Suppose 
\begin{equation}
\sup_{\bx \in \mathcal{X}, \by \in \mathcal{Y}, \bw \in \mathcal{W}} |v(\bx,\by;\bw)| \leq K, 
\quad \text{and }~~
\mathbb{E}_{(\bx,\by)\sim\mu} \left [ \Big( v(\bx,\by;\bw) - v(\bx,\by;\bw^*) \Big)^2 \right] \geq \|\bw-\bw^*\|_{*}^2,
\end{equation}
for some arbitrary norm $\|\cdot\|_{*}$ and all $\bw \in \mathcal{W}$.
Then, for any $\bw \in \mathcal{W}$, we have  
\begin{equation}
\| \bw - \bw^* \|_{*}^2 \leq 2\sqrt{1+b^2K^2} ~\left( \mathcal{L}(\bw) - \mathcal{L}(\bw^*) \right).
\end{equation}  
\end{lemma}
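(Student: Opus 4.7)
My plan is to reduce the lemma, via the identity from the proof of \cref{proposition:minimizer}, to a uniform lower bound on the conditional expected response time $\mathbb{E}[t(\bx,\by)]$, and then to establish that pointwise bound using the explicit hitting-time formula for drifted Brownian motion. Specializing \eqref{eqn:dif-L} to the DDM choice $g(\bx,\by;\bw) = v(\bx,\by;\bw)/b$ yields
\begin{equation*}
\mathcal{L}(\bw) - \mathcal{L}(\bw^*) \;=\; \frac{1}{2b^2}\,\mathbb{E}_{(\bx,\by)\sim\mu}\!\left[\mathbb{E}[t(\bx,\by)]\,\bigl(v(\bx,\by;\bw) - v(\bx,\by;\bw^*)\bigr)^2\right].
\end{equation*}
Combined with the separation hypothesis $\mathbb{E}_\mu[(v-v^*)^2] \geq \|\bw-\bw^*\|_*^2$, the conclusion will follow once I prove the pointwise bound $\mathbb{E}[t(\bx,\by)] \geq b^2/\sqrt{1+b^2K^2}$.

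For the pointwise bound, I would invoke \cref{lemma:BM_distributions} (or equivalently \cref{assumption:function_family}, which already gives $\mathbb{E}[z] = (v^*/b)\mathbb{E}[t]$, paired with the formula $\mathbb{E}[z] = \tanh(bv^*)$): writing $v^* := v(\bx,\by;\bw^*)$, the mean exit time of the drifted Brownian motion between $\pm b$ equals
\begin{equation*}
\mathbb{E}[t(\bx,\by)] \;=\; \frac{b\tanh(bv^*)}{v^*},
\end{equation*}
interpreted as $b^2$ at $v^* = 0$. Setting $u := bv^*$, the required bound reduces to the scalar inequality $\tanh(u)/u \geq 1/\sqrt{1+u^2}$; since $|u| \leq bK$, the right-hand side is then at least $1/\sqrt{1+b^2K^2}$.

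The only nontrivial piece—and I expect the main obstacle—is this hyperbolic inequality, but it reduces cleanly. Squaring and using $\cosh^2(u) = 1 + \sinh^2(u)$, after cancellation it is equivalent to $\sinh^2(u) \geq u^2$, i.e.\ $|\sinh(u)|\geq |u|$; this is immediate from the Taylor expansion $\sinh(u) = u + u^3/6 + \cdots$ for $u\geq 0$ and extends to all $u\in\mathbb{R}$ by odd symmetry of both sides of the original inequality. Once this step is in hand, chaining the three displays produces
\begin{equation*}
\|\bw - \bw^*\|_*^2 \;\leq\; \mathbb{E}_\mu[(v-v^*)^2] \;\leq\; 2\sqrt{1+b^2K^2}\,(\mathcal{L}(\bw) - \mathcal{L}(\bw^*)),
\end{equation*}
which is the claim.
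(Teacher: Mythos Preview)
Your proposal is correct and follows essentially the same approach as the paper: both start from the identity \eqref{eqn:dif-L} specialized to $g=v/b$, lower-bound $\mathbb{E}[t(\bx,\by)]$ by $b^2/\sqrt{1+b^2K^2}$ via $\mathbb{E}[t]=b^2\tanh(bv^*)/(bv^*)$ and $|bv^*|\leq bK$, and then invoke the separation hypothesis. The only difference is cosmetic: the paper cites an external reference for the inequality $\tanh(u)/u \geq 1/\sqrt{1+u^2}$, whereas you give a clean self-contained reduction to $\sinh^2(u)\geq u^2$.
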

%%%%%%%%%%%%%%%%%%%%%%%%%%%%%%
%%%%%%%%%%%%%%%%%%%%%%%%%%%%%%
%%%%%%%%%%%%%%%%%%%%%%%%%%%%%%
%%%%%%%%%%%%%%%%%%%%%%%%%%%%%%
%%%%%%%%%%%%%%%%%%%%%%%%%%%%%%
%%%%%%%%%%%%%%%%%%%%%%%%%%%%%%
\subsection{Extended Drift-Diffusion Model}\label{sec:E-DDM}

Subsequent to the success of the DDM in explaining human behavior in binary-choice tasks, several extensions have been proposed to allow for trial-to-trial variability in its parameters \citep{ratcliff2002estimating}. One of the most common relaxations is to assume that the initial starting point, rather than being fixed at zero, is drawn from a random distribution. The notion of a random starting time to accommodate the difficulties in sequential sampling models was proposed by \cite{Laming1968} (in the context of the discrete-time precursors to the DDM). For the DDM, \cite{Ratcliff1998}  explicitly incorporated a parameter for across-trial variability in the starting point. This integration was a critical step in the development of what is now often called the ``full DDM.''

The extended DDM model in \eqref{eqn:DDM_model_original} is now modified to: 
\begin{equation} \label{eqn:E-DDM_model}
W_t = B_t + v(\bx,\by;\bw^*) t + \zeta,
\end{equation}
where $\zeta$ represents the initial bias, drawn from a mean-zero distribution $\pi_0(\cdot)$ with support contained in $(-b, b)$, and $B_t$ is standard Brownian motion (starting from zero). 

We next try to answer how can we learn the underlying parameter $\bw^*$. As a first step, let us examine what the MLE estimator (which corresponds to minimizing $\mathcal{L}(\cdot)$ in the original DDM model) yields. With a slight abuse of notation, we reuse the definitions of $T_{+}^b$ and $T^b$ from \eqref{eqn:T_plus_b} and \eqref{eqn:T_b}, this time in the context of the extended DDM model. The proof of the following lemma is provided in the appendix.
%%%%%%%%%%%%%%%%%%%%%%%%%%%%%%
%%%%%%%%%%%%%%%%%%%%%%%%%%%%%%
\begin{lemma} \label{lemma:E-DDM-joint-pdf}
Under the extended DDM model \eqref{eqn:E-DDM_model}, the joint distribution of $T^b=t$ and $z=1$ is given by
\begin{equation}
\frac{\exp(bv - {v^2}/{2}t)}{\sqrt{2\pi t^3}} \int_{-b}^b \pi_0(\zeta) \sum_{n=-\infty}^\infty \left[ (2nb+b-\zeta) \exp \left (-v \zeta - \frac{(2nb+b-\zeta)^2}{2t} \right ) \right]  d\zeta,
\end{equation}
with $v$ denoting $v(\bx,\by;\bw^*)$ to simplify the expression.
\end{lemma}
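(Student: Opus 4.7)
The plan is to condition on the initial bias $\zeta$, apply Cameron-Martin-Girsanov to strip off the drift, and then invoke the method of images on $(-b, b)$ for the resulting driftless first-passage problem.

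First, since $\zeta \sim \pi_0$ is independent of the Brownian motion $B_s$, conditioning gives
\[
p_{T^b, z}(t, 1) = \int_{-b}^{b} \pi_0(\zeta)\, p_{T^b, z}(t, 1 \mid \zeta)\, d\zeta.
\]
Conditional on $\zeta$, the process $W_s = B_s + v s + \zeta$ is a Brownian motion with drift $v$ started at $\zeta$, and $\{z = 1\} = \{W_{T^b} = b\}$. Let $\mathbb{P}_v^\zeta$ and $\mathbb{P}_0^\zeta$ denote the laws of $W$ started at $\zeta$ with drift $v$ and with no drift, respectively; Girsanov's theorem gives
\[
\frac{d\mathbb{P}_v^\zeta}{d\mathbb{P}_0^\zeta}\bigg|_{\mathcal{F}_s} = \exp\!\Big(v(W_s - \zeta) - \tfrac{v^2 s}{2}\Big).
\]
Since $T^b$ is a.s.\ finite and the Girsanov exponential can be uniformly controlled by localization at $T^b \wedge n$, optional stopping yields, on $\{z = 1\}$,
\[
p_{T^b, z}(t, 1 \mid \zeta) = \exp\!\Big(v(b - \zeta) - \tfrac{v^2 t}{2}\Big)\, q(t \mid \zeta),
\]
where $q(t \mid \zeta)$ is the joint density of $(T^b, z = 1)$ under a driftless Brownian motion started at $\zeta$ with absorbing barriers at $\pm b$.

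The remaining step is to compute $q(t \mid \zeta)$ by the method of images. I would place positive image sources at $\zeta + 4nb$ and negative anti-sources at $(4n+2)b - \zeta$ for $n \in \mathbb{Z}$ (iterated reflections of $\zeta$ across $\pm b$, with alternating signs so that the resulting absorbing Green's function $G(t, \zeta, y)$ vanishes at $y = \pm b$), and then extract the first-passage density at the upper boundary as the outward probability flux $q(t \mid \zeta) = -\tfrac{1}{2} \partial_y G(t, \zeta, y)\big|_{y = b}$. Evaluating the derivative and reindexing collapses the double sum into a single series whose $n$th term carries weight $(2nb + b - \zeta)$ multiplied by $\exp(-(2nb + b - \zeta)^2/(2t))$, normalized by $1/\sqrt{2\pi t^3}$. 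Combining with the Girsanov factor and pulling the $\zeta$-independent piece $\exp(bv - v^2 t/2)$ outside the integral against $\pi_0$ produces the displayed expression.

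The main obstacle is the method-of-images step. The image series is only conditionally convergent, so the signs and indices of image charges must be tracked carefully when regrouping the terms into the specific form displayed in the statement; verifying that one has genuinely obtained the first-passage density rather than a rearrangement artifact is where the substantive work lies. A reassuring sanity check is that specializing to $\pi_0 = \delta_0$ collapses the integral and the series should reduce to the classical two-boundary first-passage density recovered in Lemma~\ref{lemma:BM_distributions}.
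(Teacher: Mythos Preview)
Your approach is essentially the same as the paper's: condition on $\zeta$, apply Girsanov to strip the drift, and reduce to the driftless two-barrier first-passage density. The only difference is cosmetic---the paper first shifts the starting point to $0$ (asymmetric barriers at $b-\zeta$ and $-(b+\zeta)$) and then cites Karatzas--Shreve, Section~2.8, for the driftless series, whereas you keep the start at $\zeta$ and outline the method-of-images derivation directly; these are equivalent.
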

%%%%%%%%%%%%%%%%%%%%%%%%%%%%%%
%%%%%%%%%%%%%%%%%%%
We can derive the joint distribution for $z = -1$ in a similar manner. As one can see, the interdependence between $v$ and $\zeta$ indicates that (i) the MLE estimator for $\bw$ would depend on the distribution $\pi_0(\cdot)$, and (ii) even for simple choices of $\pi_0(\cdot)$, such as the uniform distribution, the estimation remains intractable. 

Let us now examine what our framework yields. The following proposition characterizes the ratio of expected choice to expected response time, as defined in \eqref{eqn:ratio_choice_time}, in this setting. The proof is presented in \cref{sec:proofsddm}.
%%%%%%%%%%%%%%
%%%%%%%%%%%%%%%%%%%
\begin{proposition} \label{proposition:e-DDM-ratio}
Under the extended DDM model \eqref{eqn:E-DDM_model}, and for every pair of alternatives $(\bx,\by)$, we have
\begin{equation}
\frac{\mathbb{E}[z(\bx,\by)]}{\mathbb{E}[t(\bx,\by)]} = \frac{v(\bx,\by;\bw^*)}{b}.   
\end{equation}
\end{proposition}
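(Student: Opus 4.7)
\textbf{Proof plan for \cref{proposition:e-DDM-ratio}.}

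The plan is to prove the identity $b\,\mathbb{E}[z(\bx,\by)] = v(\bx,\by;\bw^*)\,\mathbb{E}[t(\bx,\by)]$ by applying the optional stopping theorem to a suitable martingale, and then divide through by $b\,\mathbb{E}[t(\bx,\by)]$. The key observation is that, conditional on the initial bias $\zeta$, the process
\begin{equation}
M_t := W_t - v(\bx,\by;\bw^*)\, t = B_t + \zeta
\end{equation}
is a martingale with $M_0 = \zeta$. Since $W_t$ is trapped in $(-b,b)$ up to the hitting time $T^b$, we have $W_{T^b} \in \{+b, -b\}$, and by construction $W_{T^b} = b\cdot z(\bx,\by)$.

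First, I would recall that $T^b$ is almost surely finite under the extended DDM (a consequence of standard results on diffusions with positive variance and bounded boundaries, already noted in the discussion of the basic DDM and unaffected by the bounded initial condition $\zeta \in (-b,b)$). In fact, classical first-passage estimates yield $\mathbb{E}[T^b \mid \zeta] < \infty$; this is what I need for optional stopping.

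Next, I would apply Doob's optional stopping theorem to the martingale $M_t$ at the bounded stopping times $T^b \wedge N$ and take $N \to \infty$. To pass to the limit, I would use dominated convergence, justified by the bound $|M_{t \wedge T^b}| \le b + |v(\bx,\by;\bw^*)|\,T^b + |\zeta|$, which is integrable since $\mathbb{E}[T^b\mid \zeta]<\infty$ and $|\zeta|<b$. This yields
\begin{equation}
\zeta = \mathbb{E}[M_{T^b} \mid \zeta] = \mathbb{E}[W_{T^b} \mid \zeta] - v(\bx,\by;\bw^*)\, \mathbb{E}[T^b \mid \zeta].
\end{equation}
Finally, I would take the expectation over $\zeta \sim \pi_0$. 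Since $\pi_0$ has mean zero, the left-hand side vanishes and I obtain
\begin{equation}
b\, \mathbb{E}[z(\bx,\by)] \;=\; \mathbb{E}[W_{T^b}] \;=\; v(\bx,\by;\bw^*)\, \mathbb{E}[T^b],
\end{equation}
which rearranges to the claimed identity.

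The only delicate point — and the main obstacle — is the justification of optional stopping, which requires a uniform integrability argument. This is standard once one has $\mathbb{E}[T^b \mid \zeta] < \infty$, but is worth stating carefully because $B_t$ itself is unbounded on $\{t \le T^b\}$; the argument above (truncation at $T^b \wedge N$ plus a linear-in-$T^b$ envelope) sidesteps the issue cleanly. Everything else reduces to the observation that the random starting point drops out once we average, because it enters linearly through the martingale identity and has mean zero.
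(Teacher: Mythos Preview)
Your proof is correct and follows essentially the same route as the paper: both arguments condition on the starting point $\zeta$, obtain the linear identity $b\,\mathbb{E}[z\mid\zeta] - v\,\mathbb{E}[T^b\mid\zeta] = \zeta$, and then average over the mean-zero prior $\pi_0$. The only difference is that the paper cites this conditional identity from \cite[Theorem~4.2]{taylor_karlin_stochastic}, whereas you derive it directly via optional stopping on $M_t = W_t - v t$; your derivation is self-contained and, as a bonus, does not require the case distinction $v\neq 0$ implicit in the cited formula.
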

%%%%%%%%%%%%%%%%%%%
\cref{proposition:e-DDM-ratio} implies that even for the extended DDM, the loss function $\mathcal{L}(\cdot)$ retains the same form as in the original DDM, given in \eqref{eqn:empirical_L_DDM}. In particular, in the case of linear drift, this reduces to a quadratic optimization problem. This is a remarkable result: while we previously observed that the MLE estimator could be intractable, here we do not even require any knowledge of the initialization distribution $\pi_0(\cdot)$ to estimate $\bw^*$.

\section{Race Models} \label{sec:LNR}
Race models are a type of sequential sampling model that propose that decisions are made by accumulating evidence for different alternatives until one reaches a threshold. In particular, linear deterministic accumulator models, such as the Linear Ballistic Accumulator (LBA) \citep{brown2008simplest} and the Lognormal Race model (LNR) \citep{heathcote2012linear}, are a popular class of race models. In these models, each alternative is associated with a linear trajectory that has a random starting point and a random drift. The alternative whose trajectory first reaches the decision threshold is selected. In this section, we focus primarily on the LNR model, which we now describe formally.

For alternative $\bx$, a corresponding linear process starts at a random initial point, located at distance $\Dx$ from the boundary, and proceeds with drift (slope) $\Vx$. This represents the evidence accumulation over time. Consequently, this process hits the boundary at time $\taux := \Dx/\Vx$. The parameters $\Dy$, $\Vy$, and $\tauy$ are defined similarly for the other alternative. In this model, the agent selects alternative $\bx$ (i.e., $z(\bx,\by)=1$) if $\taux \leq \tauy$, and selects alternative $\by$ otherwise. The response time $t(\bx,\by)$ is then given by $\min\{\taux,\tauy\}$.\footnote{Although this paper focuses on the two-alternative case, once can see that this model, unlike the DDM, is applicable to multiple-choice selections as well. Our proposed framework in \cref{sec:framework} also extends to that setting.}

The LNR model assumes that the joint distribution of $\Dx$ and $\Dy$, as well as the joint distribution of $\Vx$ and $\Vy$, follow a log-normal distribution. In particular, we assume that $\Dx$ and $\Dy$ are independent and both drawn from $\text{Lognormal}(D_0, 1)$, and that the drift rates are given by
\begin{equation}\label{eqn:log-normal-dist}
(\Vx, \Vy) \sim \text{Lognormal} \left( 
\begin{bmatrix} \nu(\bx;\bw^*) \\ \nu(\by;\bw^*) \end{bmatrix},
\begin{bmatrix}
1 & \rho \\
\rho & 1
\end{bmatrix}
\right),
\end{equation}
where $\rho$ denotes the correlation between the two races’ drift rates and is one of the key features of the LNR model that interconnects the two races.
It is worth noting that our results in this section extend to settings in which the drift rate variances differ or the initial distances are correlated. However, we focus on the above case to keep the expressions simpler and the insights easier to follow.

The next result characterizes the ratio of the expected choice probability to the expected response time in the LNR model.
%%%%%%%%%
\begin{proposition} \label{proposition:LNR-ratio}
Consider the LNR model described above. Then, for any two alternatives \((\bx, \by)\), the expected choice probability and expected response time are given by:
\begin{subequations}\label{eqn:expected-z-t-LNR}
\begin{align} 
\mathbb{E}[z(\bx,\by)]& = 2 \Phi \left( \frac{\nu(\bx;\bw^*) - \nu(\by;\bw^*)}{\sqrt{4-2\rho}} \right)-1,  \label{eqn:expected-z-LNR} \\
\mathbb{E}[t(\bx,\by)]& = \exp \left(D_0 +1 -\frac{\nu(\bx;\bw^*) + \nu(\by;\bw^*)}{2} \right) J_{\rho}\left( \nu(\bx;\bw^*) - \nu(\by;\bw^*)\right), \label{eqn:expected-t-LNR}
\end{align}
\end{subequations}
where $\Phi(\cdot)$ is the CDF of the standard normal distribution, and $J_{\rho}(\cdot)$ is given by
\begin{equation}
J_{\rho}(r):= \exp(r/2) \Phi \left(\frac{-r+\rho-2}{\sqrt{4-2\rho}} \right) + \exp(-r/2)\Phi \left( \frac{r+\rho-2}{\sqrt{4-2\rho}} \right). 
\end{equation}
\end{proposition}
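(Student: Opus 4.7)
The plan is to work in log-space, where all relevant quantities become jointly Gaussian, and then to handle the two identities separately. Define $A := \log \taux = \log \Dx - \log \Vx$ and $B := \log \tauy = \log \Dy - \log \Vy$. Since $\log \Dx, \log \Dy \sim \mathcal{N}(D_0,1)$ are independent of each other and of $(\log \Vx, \log \Vy)$, the pair $(A,B)$ is jointly Gaussian with means $\mu_A = D_0 - \nu(\bx;\bw^*)$, $\mu_B = D_0 - \nu(\by;\bw^*)$, variances $\mathrm{Var}(A) = \mathrm{Var}(B) = 2$, and covariance $\mathrm{Cov}(A,B) = \rho$ (the $D$-terms contribute $1$ to each variance but nothing to the cross-covariance). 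In particular, $B - A \sim \mathcal{N}\bigl(\nu(\bx;\bw^*) - \nu(\by;\bw^*),\, 4 - 2\rho\bigr)$.

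For the first identity \eqref{eqn:expected-z-LNR}, I would use $\mathbb{E}[z(\bx,\by)] = 2\,\Pr(\taux < \tauy) - 1 = 2\,\Pr(A < B) - 1$. The event $\{A < B\}$ is $\{B - A > 0\}$, and plugging in the Gaussian distribution of $B - A$ computed above yields the stated formula immediately.

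For \eqref{eqn:expected-t-LNR}, the main calculation is
\begin{equation*}
\mathbb{E}\bigl[\min(\taux,\tauy)\bigr] \;=\; \mathbb{E}\bigl[e^A \mathbf{1}_{A<B}\bigr] \;+\; \mathbb{E}\bigl[e^B \mathbf{1}_{B<A}\bigr].
\end{equation*}
For each term I would apply the exponential-tilting identity for jointly Gaussian vectors: if $(A,B)$ is jointly Gaussian, then
\begin{equation*}
\mathbb{E}\bigl[e^A \mathbf{1}_{A<B}\bigr] \;=\; e^{\mu_A + \mathrm{Var}(A)/2}\;\widetilde{\Pr}(A<B),
\end{equation*}
where under $\widetilde{\Pr}$ the mean of $A$ is shifted by $\mathrm{Var}(A)$ and the mean of $B$ is shifted by $\mathrm{Cov}(A,B)$, with covariance structure unchanged. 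Under this tilt, $B - A$ becomes Gaussian with mean $\mu_B - \mu_A + \rho - 2 = \nu(\bx;\bw^*) - \nu(\by;\bw^*) + \rho - 2$ and variance $4 - 2\rho$, so the tilted probability is $\Phi\bigl((\nu(\bx;\bw^*) - \nu(\by;\bw^*) + \rho - 2)/\sqrt{4-2\rho}\bigr)$. Symmetrically for the second term. Factoring out $e^{D_0 + 1 - (\nu(\bx;\bw^*)+\nu(\by;\bw^*))/2}$ from the sum and setting $r := \nu(\bx;\bw^*) - \nu(\by;\bw^*)$ collects the remaining exponentials and $\Phi$-terms into exactly $J_\rho(r)$.

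The main obstacle is really just bookkeeping in the second part: keeping the signs of the tilts straight and verifying that the prefactor simplifies to the symmetric form $D_0 + 1 - (\nu(\bx;\bw^*)+\nu(\by;\bw^*))/2$ after one symmetrizes the two contributions to $\mathbb{E}[\min(\taux,\tauy)]$. Once the tilting identity is stated carefully (it is a one-line consequence of completing the square in the Gaussian density, or equivalently a Cameron–Martin shift), everything else is algebra.
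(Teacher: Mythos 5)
Your proposal is correct, and both identities check out against the statement: in log-space $(A,B)=(\log\taux,\log\tauy)$ is bivariate normal with means $D_0-\nu(\bx;\bw^*)$, $D_0-\nu(\by;\bw^*)$, variances $2$ and covariance $\rho$, so $\mathbb{E}[e^{A}\mathbf{1}_{A<B}]=e^{D_0-\nu(\bx;\bw^*)+1}\,\Phi\bigl((r+\rho-2)/\sqrt{4-2\rho}\bigr)$ with $r=\nu(\bx;\bw^*)-\nu(\by;\bw^*)$, and factoring out $e^{D_0+1-(\nu(\bx;\bw^*)+\nu(\by;\bw^*))/2}$ from the two symmetric terms produces exactly $J_\rho(r)$. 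Your route coincides with the paper's for the choice probability and for the decomposition $\mathbb{E}[\min(\taux,\tauy)]=\mathbb{E}[e^{A}\mathbf{1}_{A<B}]+\mathbb{E}[e^{B}\mathbf{1}_{B<A}]$, but differs in how each term is evaluated: the paper writes out the bivariate normal density explicitly, completes the square in the inner variable to produce a $\Phi$ term, and then finishes the outer integral with the identity $\mathbb{E}_{X\sim\mathcal{N}}[\Phi(aX+b)]=\Phi(\cdot)$, whereas you invoke the Gaussian exponential-tilting (Cameron--Martin) identity, which collapses both integration steps into a single shifted-mean probability. The two are computationally equivalent --- the tilting lemma is itself proved by the same completion of the square --- but your version is shorter, less error-prone in the bookkeeping, and makes transparent why the answer has the form ``lognormal mean times a tilted orthant probability,'' which would also generalize more easily to unequal drift variances or correlated starting points, a generalization the paper explicitly mentions but does not carry out. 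The only thing to make rigorous in a final write-up is the one-line statement and proof of the tilting identity itself (mean shifts by $\Sigma\theta$ with $\theta=(1,0)^\top$, normalizer $e^{\mu_A+\mathrm{Var}(A)/2}$), which you have correctly identified.
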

%%%%%%
See \cref{sec:proofsrace} for the proof.
This result allows us to define a parameterized class of functions for the ratio of the expected decision to the expected response time in LNR models. Although the resulting optimization problem is non-convex, it can still be solved numerically, as demonstrated by an example in \cref{sec:experiments}.

\section{Learning Halfspaces} \label{sec:perceptron}
We now turn our attention to a class of threshold models. Consider the case where the agent’s decision follows a linear model, up to some noise. More specifically, if an agent is presented with two alternatives $\bx$ and $\by$, they choose between them according to the sign of $(\bx - \by)^\top \bw^*$, but may deviate or make a mistake with some probability $\eta(\bx, \by)$. More formally, we have
\begin{equation}
p(z;\bx, \by) =
\begin{cases}
1 - \eta(\bx, \by) & \text{if } z = \text{sign}((\bx - \by)^\top \bw^*) \\
\eta(\bx, \by) & \text{otherwise},
\end{cases}
\end{equation}
for some \textit{unknown} error function $\eta : \mathcal{X} \times \mathcal{Y} \to [0, \bar\eta]$, where the \textit{known} parameter $\bar\eta < \frac{1}{2}$ provides an upper bound on the probability of making a mistake. This is known as the problem of learning half-spaces with Massart noise \citep{massart2006risk}. A special case of this model in which $\eta(\bx,\by)=\bar\eta$, i.e., the probability of mistake is constant and regardless of alternatives, is known as the Random Classification Noise (RCN) setting. 

An advantage of this basic setting over the DDM and Race models is that it does not require the assumption of a procedural model of decision making. Suppose that we operate under this model and do not observe the response time. We are interested in understanding how accurately we can estimate the agent’s decisions on new, unseen data. This problem has been studied extensively in the learning theory literature, beginning with the Perceptron algorithm in the noiseless setting \citep{rosenblatt1958perceptron}, and continuing with the work of \cite{bylander1994learning} and \cite{blum1998polynomial} on the design of polynomial-time algorithms in the RCN setting.\footnote{The literature on learning halfspaces typically focuses on labeling a single vector $\bx$, rather than the difference between two alternatives $\bx - \by$, as in our setting. However, for convenience, we translate their results into our framework when presenting them.
} Recent work has focused on determining how many samples are needed to achieve guarantees of the following form:
%%%%%%%%
\begin{definition} \label{definition:high_prob_predictiton}
For any $\alpha, \delta \in (0,1]$, we call an algorithm an ($\alpha$, $\delta$)-predictor if, with probability at least $1 - \delta$, it returns a function $h:\mathcal{X} \times \mathcal{Y} \to \{-1, +1\}$ such that the probability of predicting differently from the agent’s decision is at most $\alpha$, i.e.,
\begin{equation}
\mathbb{P} \left(h(\bx,\by) \neq  z(\bx,\by) \right) \leq \alpha,
\end{equation}
where the probability is over both the realization of alternatives $(\bx, \by)$ as well as the randomness in the agent’s decision $z(\cdot,\cdot)$. %The algorithm is further called proper if $h$ has the same form as the underlying linear model; that is, there exists $\hat{\bw}$ such that $h(\bx, \by) = \text{sign}((\bx - \by)^\top \hat{\bw})$.
\end{definition}
%%%%%%%%
One could naturally consider an alternative criterion that focuses on how well we can estimate the underlying model, rather than the agent's action. The following definition captures this idea.
%%%%%%%%
\begin{definition} \label{definition:high_prob_accuracy}
For any $\varepsilon, \delta \in (0,1]$, we call an algorithm an ($\varepsilon$, $\delta$)-estimator if, with probability at least $1 - \delta$, it returns a function $h:\mathcal{X} \times \mathcal{Y} \to \{-1, +1\}$ such that the probability of estimating an output that differs from the true model $\bw^*$ is at most $\varepsilon$, i.e.,
\begin{equation}
\mu \left( \left\{ (\bx,\by) : h(\bx,\by) \neq  \text{sign}((\bx - \by)^\top \bw^*) \right \} \right) \leq \varepsilon.
\end{equation}
%Similarly, the algorithm is further called proper if there exists $\hat{\bw}$ such that $h(\bx, \by) = \text{sign}((\bx - \by)^\top \hat{\bw})$.
\end{definition}
%%%%%%%%
In the RCN setting, one can verify that these two definitions are, in fact, equivalent, in the sense that an algorithm is an ($\varepsilon$, $\delta$)-estimator, if and only if, it is an ($(1-2\eta) \varepsilon + \eta$, $\delta$)-predictor. However, under Massart noise, the relationship between the two notions is less straightforward. Most of the work in the literature addressing Massart noise proposes algorithms that are $(\eta + \varepsilon, \delta)$-predictors \citep{diakonikolas2019distribution, diakonikolas2024near, chandrasekaran2024learning}. This can, of course, result from an algorithm that is a $(\frac{\varepsilon}{1 - 2\eta}, \delta)$-estimator. However, in the other direction, the latter may actually provide a stronger guarantee in terms of prediction---especially if the upper bound $\eta(\bx, \by) \leq \eta$ is loose. Finally, it is worth emphasizing that neither of these two definitions necessarily implies learning the underlying parameter $w^*$.

Let us now briefly discuss the state-of-the-art results under both RCN and Massart noise.  A common assumption in many papers studying the halfspace learning problem is that the data satisfies a margin property---meaning that, with probability one, we have $|(\bx - \by)^\top \bw^*| \geq \gamma$ for some margin $\gamma > 0$. Under this assumption, the recent work of \cite{diakonikolas2024near, chandrasekaran2024learning} considers a setting with Massart noise and $D=1$. They propose an $(\eta + \varepsilon, 0.1)$-predictor that requires $\tilde{\mathcal{O}}(1 / (\gamma^2 \varepsilon^2))$ samples. This improves upon the result of \cite{diakonikolas2019distribution}, which requires $\tilde{\mathcal{O}}(1 / (\gamma^3 \varepsilon^5))$ samples, and matches the sample complexity achieved by \cite{diakonikolas2023information} under RCN (again for $(\eta + \varepsilon, 0.1)$-predictors). \footnote{It is well known that any result with a finite probability of failure (0.1 in this case) can be transformed into a result with failure probability $\delta$, by increasing the sample complexity by a factor of $\mathcal{O}(\log(1/\delta))$. We further explain how this works in the proof of \cref{proposition:translate-to-high-probability}.}
Without the margin assumption, but under the additional assumption that each coordinate of $\bx - \by$ is a $b$-bit integer (i.e., $\bx - \by \in \{1, 2, \cdots, 2^b\}^d$), and assuming Massart noise with $D = 1$, \cite{diakonikolas2019distribution} propose an $(\eta + \varepsilon, 0.1)$-predictor that requires $\tilde{\mathcal{O}}(d^3 b^3 / \varepsilon^5)$ samples.

As \cref{lemma:BM_distributions} shows, the case of the DDM model with linear drift in \cref{sec:DDM} can be seen as a problem of learning halfspaces with Massart noise, with
\begin{equation}
\eta(\bx,\by) = \frac{1}{1+\exp \left(2b|(\bx-\by)^\top \bw^*|\right)}.
\end{equation}

It is worth noting that our result in \cref{theorem:linear_DDM} provides a stronger guarantee than \cref{definition:high_prob_predictiton} or \cref{definition:high_prob_accuracy}, since it implies convergence to the true model (which does not necessarily follow from those definitions). That said, we next discuss how our result translates to \cref{definition:high_prob_accuracy} (which, as discussed earlier, can in turn be translated to guarantees in the form of \cref{definition:high_prob_predictiton}). The proof can be found in \cref{sec:proofhighprob}.
%%%%%%%%%%%%%%%%%%%%%%%%%%%%%%
%%%%%%%%%%%%%%%%%%%%%%%%%%%%%%
\begin{proposition} \label{proposition:translate-to-high-probability}
Consider the DDM model in \cref{sec:DDM} with linear drift \eqref{eqn:linear_drift}, and assume that $D \leq 1$ and $\|\bw^*\| \leq 1$.\footnote{This result easily extends to any bound on $D$ and $\|\bw^*\|$; however, we make this assumption to facilitate comparison with the results above.} Then, for any positive $\varepsilon, \delta$, and $\gamma$,
\begin{equation}
\mathcal{O}\left(\frac{\log(1/\delta)d}{\varepsilon \gamma^2 \min\{b,b^2\}}\right)    
\end{equation}
samples are sufficient to find an $(\tilde{\varepsilon}, \delta)$-estimator, where
\begin{equation}
\tilde{\varepsilon} := \varepsilon + \mu \left( \left\{(\bx,\by): |(\bx-\by)^\top \bw^*| < \gamma \right\}\right).    
\end{equation}
\end{proposition}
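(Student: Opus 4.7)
The plan is to convert the in-expectation $\Sigma$-norm bound from \cref{theorem:linear_DDM} into the desired $(\tilde{\varepsilon},\delta)$-estimator guarantee for the linear classifier $h(\bx,\by):=\mathrm{sign}((\bx-\by)^\top\hat{\bw})$. This proceeds in three steps: simplify the theorem's bound under the boundedness assumptions, relate $\Sigma$-norm error to classification error via a margin argument, and go from a constant failure probability to $\delta$ by standard confidence amplification.

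First, I would instantiate \cref{theorem:linear_DDM} at $\hat{\bw}_0=\mathbf{0}$ and write $\bar{\bw}_n:=\tfrac{1}{n+1}\sum_{i=0}^n\hat{\bw}_i$. Under $D\leq 1$ and $\|\bw^*\|\leq 1$ we have $\sqrt{1+b^2D^2\|\bw^*\|^2}\leq 1+b$ and $d/b^2+2D^2\|\hat{\bw}_0-\bw^*\|^2\leq d/b^2+2$. The elementary inequality $(1+b)/b^2\leq 2/\min\{b,b^2\}$ (verify separately for $b\leq 1$ and $b>1$) collapses the theorem's bound to
\begin{equation*}
\mathbb{E}\bigl[\|\bar{\bw}_n-\bw^*\|_\Sigma^2\bigr]\;\leq\;C_0\,\frac{d}{n\min\{b,b^2\}}
\end{equation*}
for a universal constant $C_0$.

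Second, I would turn this into a classification error bound. A misclassification by $h$ at $(\bx,\by)$ forces either $|(\bx-\by)^\top\bw^*|<\gamma$ or $|(\bx-\by)^\top(\bar{\bw}_n-\bw^*)|\geq\gamma$, since when two real numbers have opposite signs their difference absorbs at least the magnitude of the larger one. By Markov's inequality over $\mu$ conditional on $\bar{\bw}_n$, the $\mu$-probability of the second event is at most $\|\bar{\bw}_n-\bw^*\|_\Sigma^2/\gamma^2$. Choosing $n\geq C_1\,d/(\varepsilon\gamma^2\min\{b,b^2\})$ for a sufficiently large $C_1$ makes the expected $\Sigma$-norm error at most $\varepsilon\gamma^2/10$, and a second Markov step (now over the training sample) produces a weak $(\tilde{\varepsilon},\,1/10)$-estimator.

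The step I expect to be the main obstacle is the third: converting this constant-failure weak learner into one with failure probability $\delta$ while paying only a $\log(1/\delta)$ factor in samples. My plan is a standard validation-based boosting: split a sample budget of $\mathcal{O}\bigl(\log(1/\delta)\,d/(\varepsilon\gamma^2\min\{b,b^2\})\bigr)$ into $k=\Theta(\log(1/\delta))$ disjoint training blocks, run the weak learner on each to produce candidates $\bar{\bw}^{(1)},\dots,\bar{\bw}^{(k)}$, and use an additional held-out validation subset of size $\tilde{\mathcal{O}}(\log(k/\delta)/\varepsilon)$ to select a candidate with small classification error by comparing empirical disagreement rates with the agent's observed choices. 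The Massart property $\eta(\bx,\by)=1/(1+\exp(2b|(\bx-\by)^\top\bw^*|))<1/2$ on the $\gamma$-margin event calibrates the observed disagreement rate to the underlying $\mu$-classification error, making this criterion monotone in the quantity we care about. A Chernoff bound ensures at least one $\tilde{\varepsilon}$-good candidate among the $k$ with probability $\geq 1-\delta/2$, while Hoeffding plus a union bound over the $k$ candidates controls the validation concentration with probability $\geq 1-\delta/2$; a final union bound yields the claimed $(\tilde{\varepsilon},\delta)$-estimator.
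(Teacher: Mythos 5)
Your first two steps coincide with the paper's proof: the same instantiation and simplification of \cref{theorem:linear_DDM}, the same decomposition of the misclassification event into the low-margin set plus the event $|(\bx-\by)^\top(\hat{\bw}-\bw^*)|\geq\gamma$, and the same Chebyshev/Markov bound $\|\hat{\bw}-\bw^*\|_\Sigma^2/\gamma^2$ followed by a Markov step over the training sample to get a constant-failure-probability estimator. The divergence, and the gap, is in your third step. The paper amplifies the confidence by \emph{majority voting}: it trains $k=\Theta(\log(1/\delta))$ independent candidates each achieving error $\varepsilon/4$ with probability $0.9$, shows via Hoeffding that with probability $1-\delta$ at least three quarters of them are good, and then bounds the measure of points where the majority sign is wrong by a Markov argument on the count $\sum_j \mathbbm{1}((\bx,\by)\in E_j)$. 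This requires no held-out data and never needs to evaluate any candidate against labels.

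Your validation-based selection, by contrast, must compare candidates using the observed noisy choices $z$, whereas the target criterion in \cref{definition:high_prob_accuracy} is disagreement with the unobserved $\mathrm{sign}((\bx-\by)^\top\bw^*)$. The excess disagreement of a candidate $h$ with $z$ over the Bayes rate is $\mathbb{E}[(1-2\eta(\bx,\by))\mathbbm{1}(h\neq\mathrm{sign}((\bx-\by)^\top\bw^*))]$, and on the $\gamma$-margin region the calibration factor $1-2\eta(\bx,\by)=\tanh(b|(\bx-\by)^\top\bw^*|)$ is only bounded below by $\tanh(b\gamma)$, which can be arbitrarily small. To separate a candidate with classification error $\varepsilon$ from one with error $2\varepsilon$ you therefore need the validation set to resolve additive deviations of order $\varepsilon\tanh(b\gamma)$ around a base rate that may be close to $1/2$, which by Hoeffding costs $\Omega\bigl(\log(k/\delta)/(\varepsilon^2\tanh^2(b\gamma))\bigr)$ samples---not the $\tilde{\mathcal{O}}(\log(k/\delta)/\varepsilon)$ you budget---and this term is not dominated by the claimed $\mathcal{O}(\log(1/\delta)d/(\varepsilon\gamma^2\min\{b,b^2\}))$. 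There is also a secondary issue: minimizing empirical disagreement with $z$ minimizes the $(1-2\eta)$-reweighted classification error, not $\mu(h\neq\mathrm{sign}((\bx-\by)^\top\bw^*))$ itself, so translating the selected candidate's guarantee back to \cref{definition:high_prob_accuracy} costs another factor of $1/\tanh(b\gamma)$. Replacing the validation step with the paper's majority-vote aggregation removes both problems and recovers the stated sample complexity.
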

%%%%%%%%%%
%%%%%%%%%%

Let us compare this result with the existing results on learning halfspaces which we stated earlier. First, our result focuses on recovering the underlying parameter $\bw^*$, which, as we discussed earlier, is a stronger guarantee than merely predicting the agent’s choices correctly. Moreover, we do not require any large-margin assumption, since the result of \cref{proposition:translate-to-high-probability} holds for any $\gamma$. For example, depending on the underlying distribution, one can optimize over $\gamma$ or choose it sufficiently small to achieve an overall $2\varepsilon$ error.

For the sake of comparison, if the data satisfies a $\gamma$-margin condition (meaning that $|(\bx - \by)^\top \bw^*| \geq \gamma$ almost surely for some $\gamma$), then our sample complexity in terms of its dependence on $\varepsilon$ improves the existing $1/\varepsilon^2$ rate to $1/\varepsilon$ under the assumed DDM model. It is worth noting that, in the standard learning halfspaces problem, and when some margin condition holds, the dependence on the dimension is typically removed using random projections, such as those given by the Johnson–Lindenstrauss lemma~\citep[see, e.g.,][]{diakonikolas2023information}. However, since our goal is to recover the true $\bw^*$, we do not employ such dimensionality reductions in this paper.
\section{Empirical Application} \label{sec:experiments}

As an illustration of our methodology, we develop an empirical application using data on intertemporal choice collected by \cite{amasino2019amount}. Specifically, we use their replication dataset, which includes data from one hundred human subjects (agents). For each agent, there are approximately 141 choice observations. Each observation presents two alternatives to the individual: one option offers an immediate monetary amount $m_i$, while the other option offers ten dollars after a delay of $t_d$ units of time. Consequently, each pair of choices can be represented as two-dimensional vectors: $[m_i, 0]$ and $[10, t_d]$, where the first entry denotes the monetary amount and the second entry denotes the time delay.\footnote{This is the paradigm of \textit{dated rewards} introduced by \cite{lancaster63} and \cite{fishrubin89}.}

For each of the 100 agents, we use the first 100 data points to train the model and the remaining data points (approximately 40) for testing, as described below. It is important to note that we train a separate model from scratch for each agent. More specifically, we train three models for each agent. The first two are based on the DDM framework: one uses only the choice data, while the other uses both choice and time data (corresponding to our formulation in \eqref{eqn:main_loss_empirical}). The third model corresponds to the LNR model with $D_0$ and $\rho$ set to zero.
The model that only uses choice data corresponds to the standard procedure in economics (it reduces to the logistic loss discussed in Remark~\ref{remark:logistic}).

For each agent, we measure how well the trained models can predict the subject’s choices and their response times, as detailed below.

\paragraph{Predicting the agent's choice:}
To evaluate the models’ performance in predicting the agents' choices, we compute the fraction of the approximately 40 test samples for which the model predicts the agent’s choice incorrectly. We refer to this fraction as the error rate. The empirical CDF of the error rate for all three models is shown in \cref{fig:prediction-error-CDF}, and the histogram for the two DDM models is also shown in \cref{fig:prediction-error-Hist}. The average error rates for the DDM model using both choice and response-time data, the DDM model using only choice data, and the LNR model are given by 0.0627, 0.0754, and 0.1215, respectively.

%%%%%%%%%%%%%%%%%%%%%%%%%%%%%%%%%%%%%
\begin{figure}
\centering
\begin{subfigure}{.49\textwidth}
  \centering
  \includegraphics[width=1\linewidth]{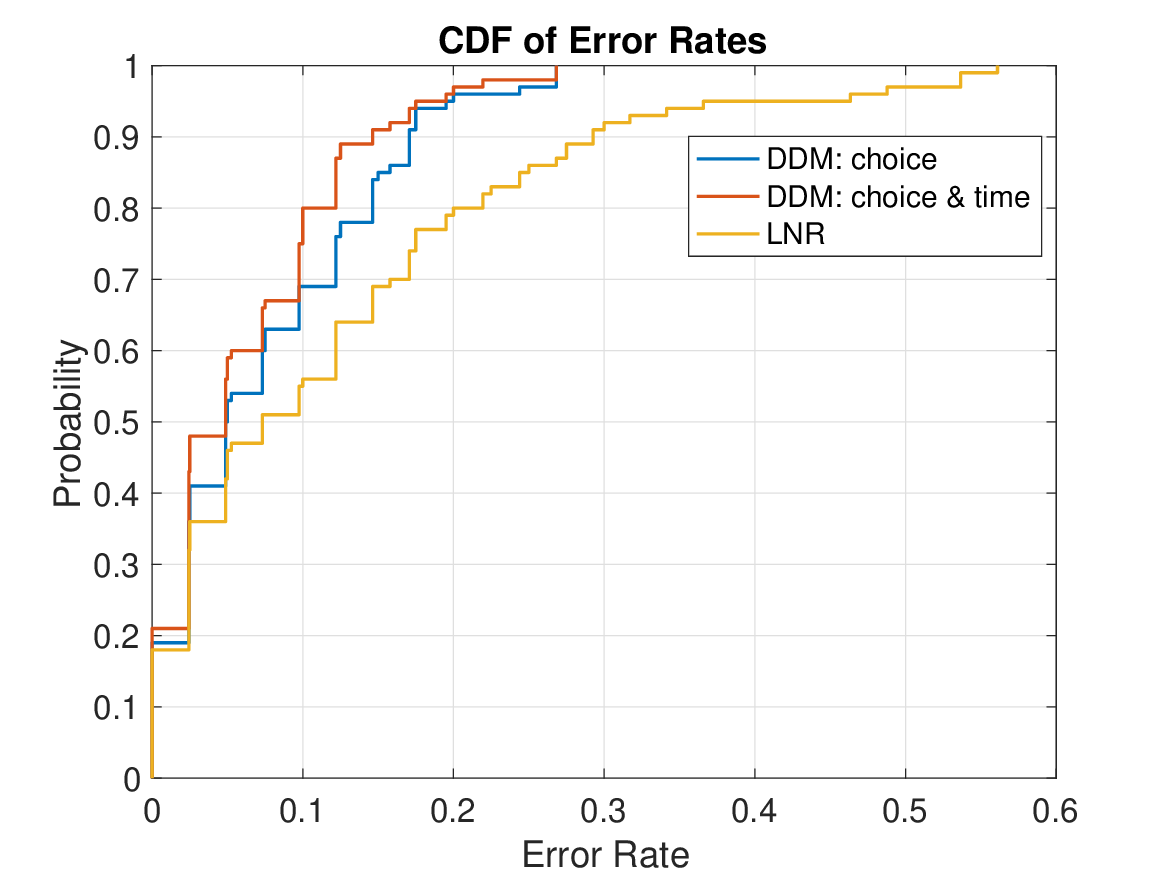}
  \caption{The empirical CDF of error rates}
  \label{fig:prediction-error-CDF} 
\end{subfigure}%%%%%%%
\begin{subfigure}{.49\textwidth}
  \centering
  \includegraphics[width=1\linewidth]{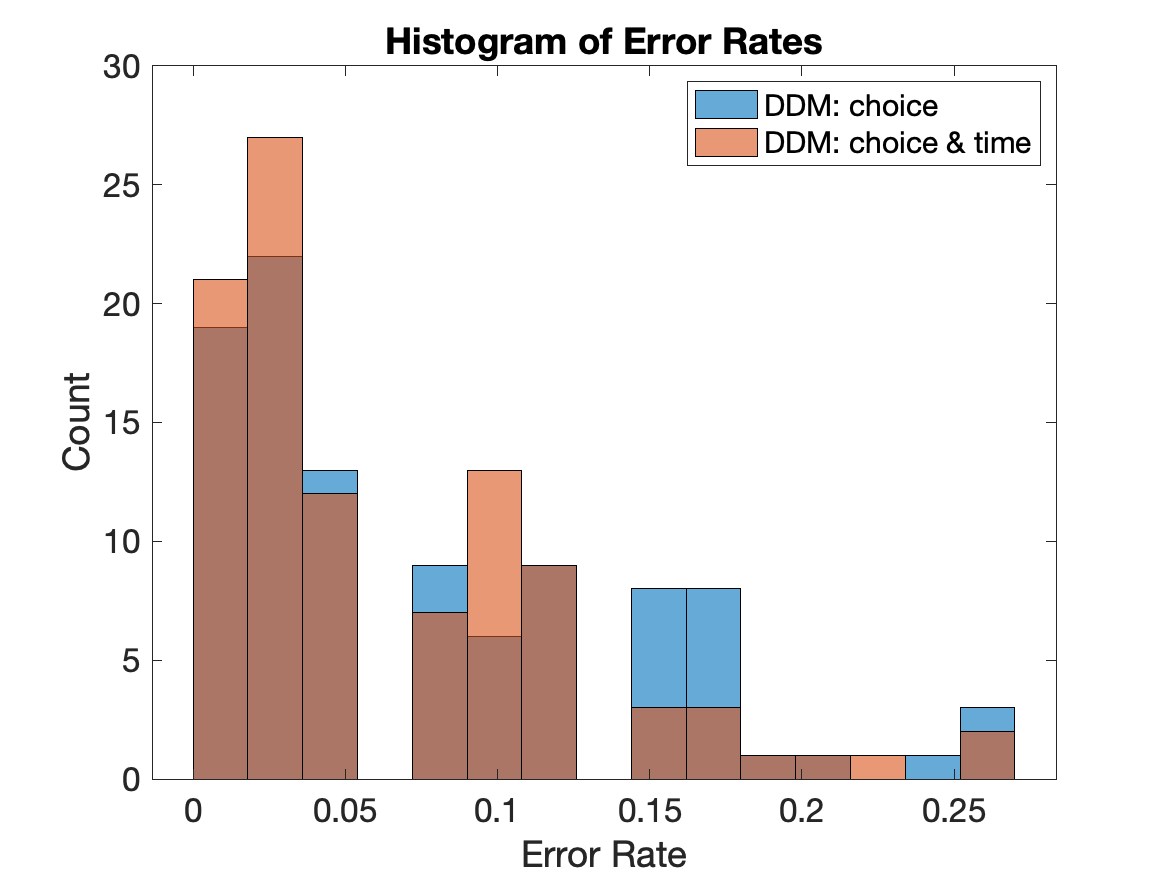}
  \caption{The histogram of error rates}
  \label{fig:prediction-error-Hist}
\end{subfigure}%%%%%%%
\caption{The error rate in predicting agent's choice over the test data for DDM and LNR models}
\label{fig:prediction-error}
\end{figure}
%%%%%%%%%%%%%%%%%%%%%%%%%%%%%%%%%%%%%%
First, all three models result in non-trivial error rates, while the two DDM models demonstrate better performance in predicting the agent’s choices compared to the LNR model. Our results show that changing the specification of the LNR does not lead to significant improvements in predictive ability, and the DDM models remain superior. Furthermore, between the two DDM models, the model trained with both time and choice data performs better, as the empirical CDF of the model trained with choice data alone almost exhibits first-order stochastic dominance over that of the model trained with both time and choice data. This indicates that incorporating response-time data leads to a better-trained model and higher predictive accuracy. The LNR model's performance is worse even in the sense of first-order stochastic dominance.

\paragraph{Predicting the response time:}
We next evaluate the model trained using both choice and response-time data to assess its ability to predict response times. We make two remarks in this regard. First, as discussed in \cref{sec:DDM}, when the boundary parameter $b$ is unknown, our formulation instead learns $\bw^*/b$. This does not affect a binary prediction of the agent’s choice, since choice behavior is invariant to the scaling of $\bw^*$. However, an estimate of $b$ is essential for predicting response times. To address this, we apply a moment-matching method to estimate the value of $b$ that best predicts the average response time, given the trained estimate of $\bw^*/b$.

Second, even with a perfectly specified DDM and for a given pair of alternatives, response time remains a continuous random variable. Moreover, since the pairs of alternatives vary across our sample data, the underlying distribution of response times also changes, further complicating the prediction task. To capture this uncertainty, we report a coverage interval for the predicted response time: specifically, we construct an interval centered at the mean response time predicted by the trained DDM, with a radius equal to one standard deviation.

\cref{fig:prediction-RT-samples} illustrates this coverage interval and its mean, together with the observed response times for the first 200 samples. Additionally, for each agent, we compute the miscoverage rate---defined as the fraction of samples for which the observed response time falls outside the predicted coverage interval. The histogram of miscoverage rates across all agents is shown in \cref{fig:prediction-RT-Hist}. The average miscoverage rate is 0.051.

%%%%%%%%%%%%%%%%%%%%%%%%%%%%%%%%%%%%%
\begin{figure}
\centering
\begin{subfigure}{.49\textwidth}
  \centering
  \includegraphics[width=1\linewidth]{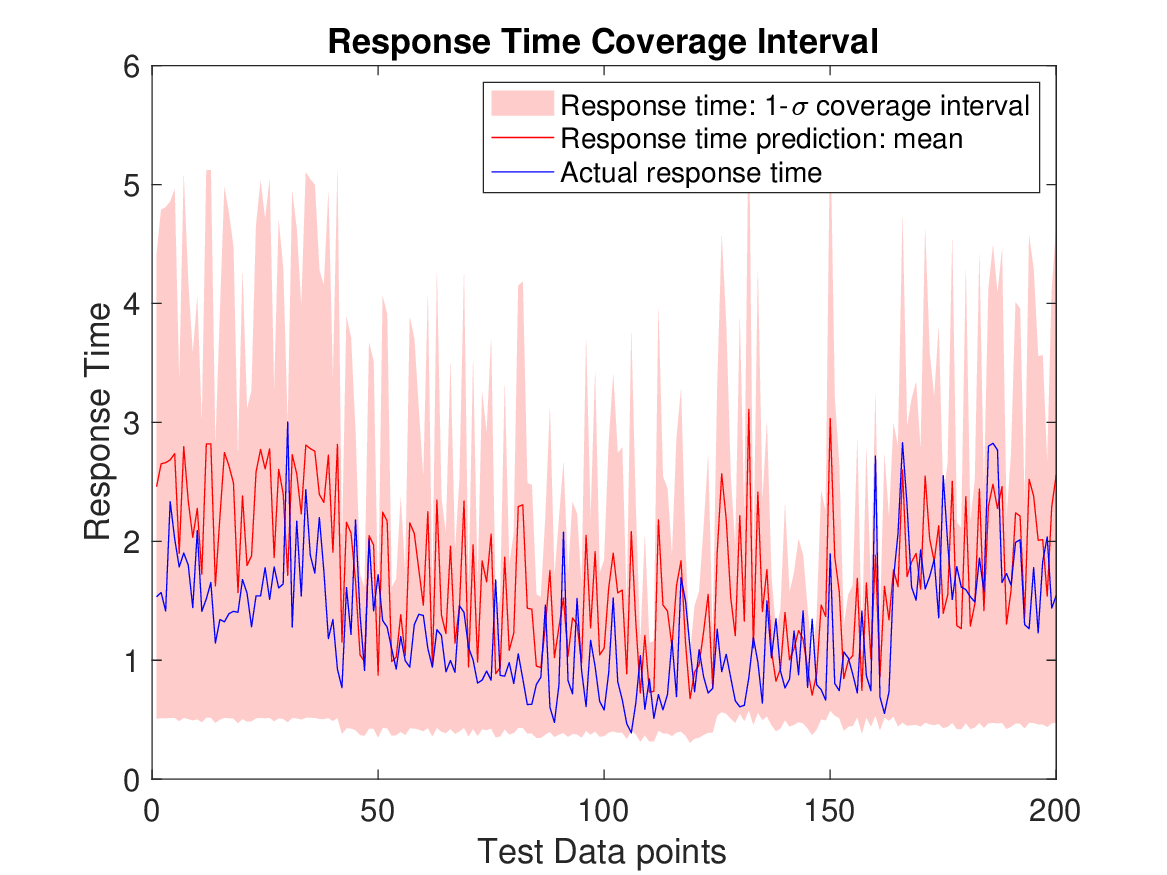}
  \caption{Coverage interval for 200 samples}
  \label{fig:prediction-RT-samples} 
\end{subfigure}%%%%%%%
\begin{subfigure}{.49\textwidth}
  \centering
  \includegraphics[width=1\linewidth]{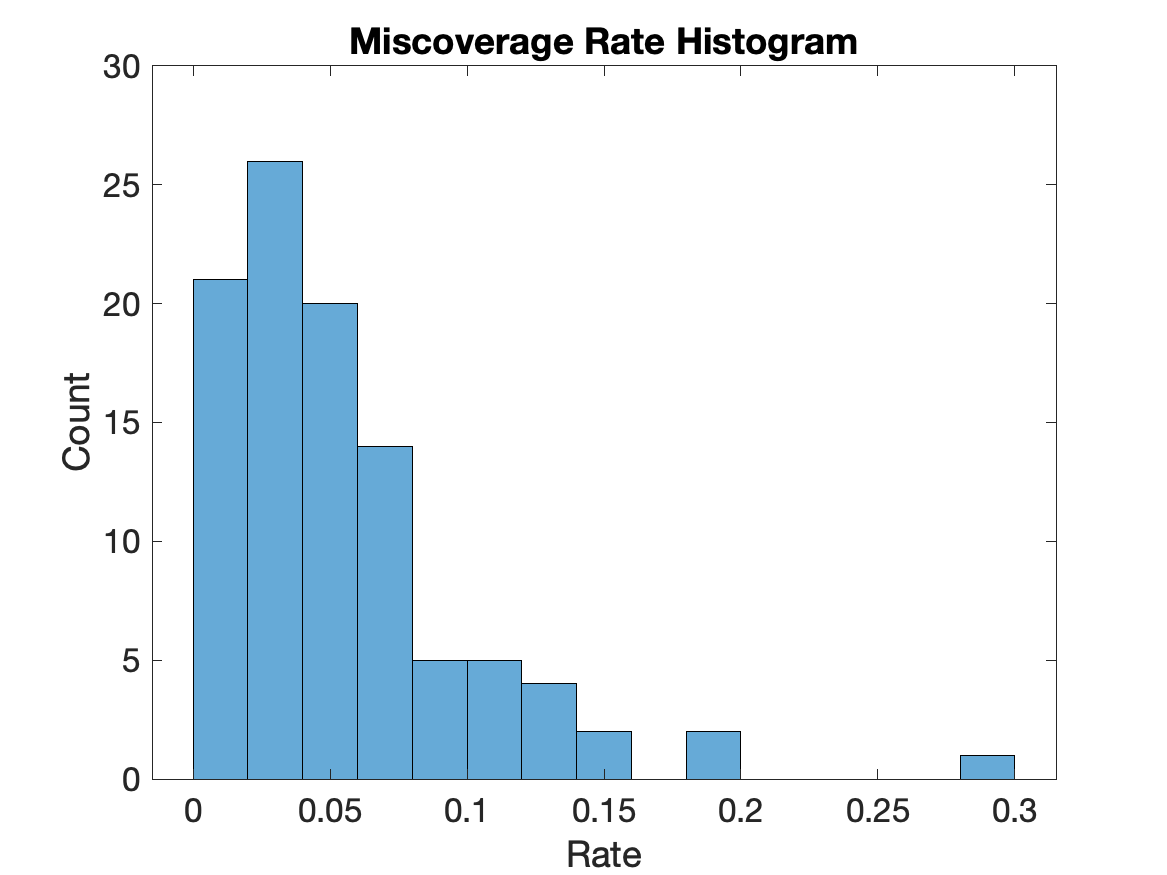}
  \caption{Histogram of miscoverage rate}
  \label{fig:prediction-RT-Hist}
\end{subfigure}%%%%%%%
\caption{Predicting the response time}
\label{fig:prediction-RT}
\end{figure}
%%%%%%%%%%%%%%%%%%%%%%%%%%%%%%%%%%%%%%
\paragraph{Interpreting the parameter estimates:} 
Recall that each alternative in the experiment is a dated reward, $[\text{money}, \text{time}]$. Therefore, for the linear DDM models, the two parameters can be interpreted through the effect of the money-delay intertemporal tradeoff on the drift rate (the relative utility of the two alternatives in each choice problem). In \cref{fig:model-weights}, we plot the results for the two trained DDM models side by side. Each plot shows all individual parameter estimates (all the 100 trained models). As one can see, the weight for money is always nonnegative, which is expected. The coefficient on time is generally negative, but there are a few subjects for whom we obtain a positive estimate. 

%%%%%%%%%%%%%%%%%%%%%%%%%%%%%%%%%%%%%
\begin{figure}
\centering
\begin{subfigure}{.49\textwidth}
  \centering
  \includegraphics[width=1\linewidth]{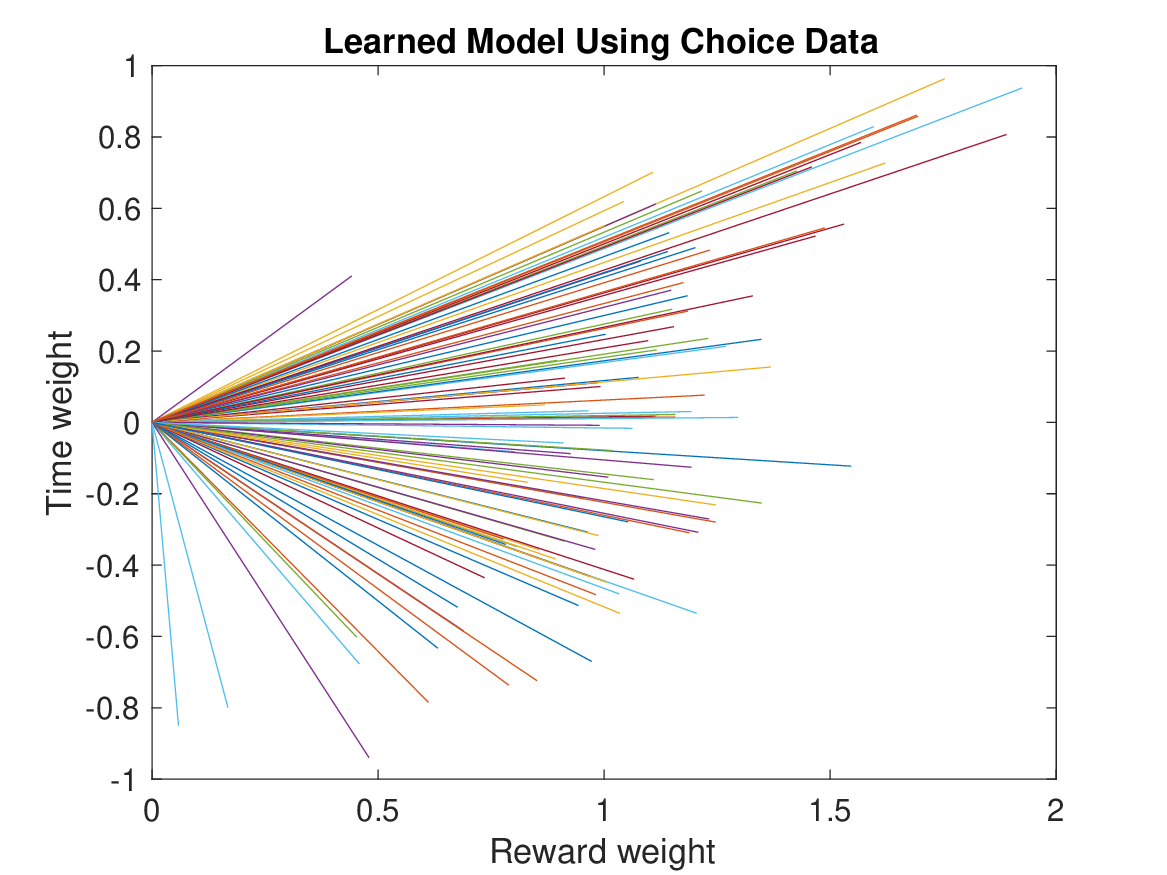}
  \caption{Trained over choice data}
  \label{fig:model-weights-1} 
\end{subfigure}%%%%%%%
\begin{subfigure}{.49\textwidth}
  \centering
  \includegraphics[width=1\linewidth]{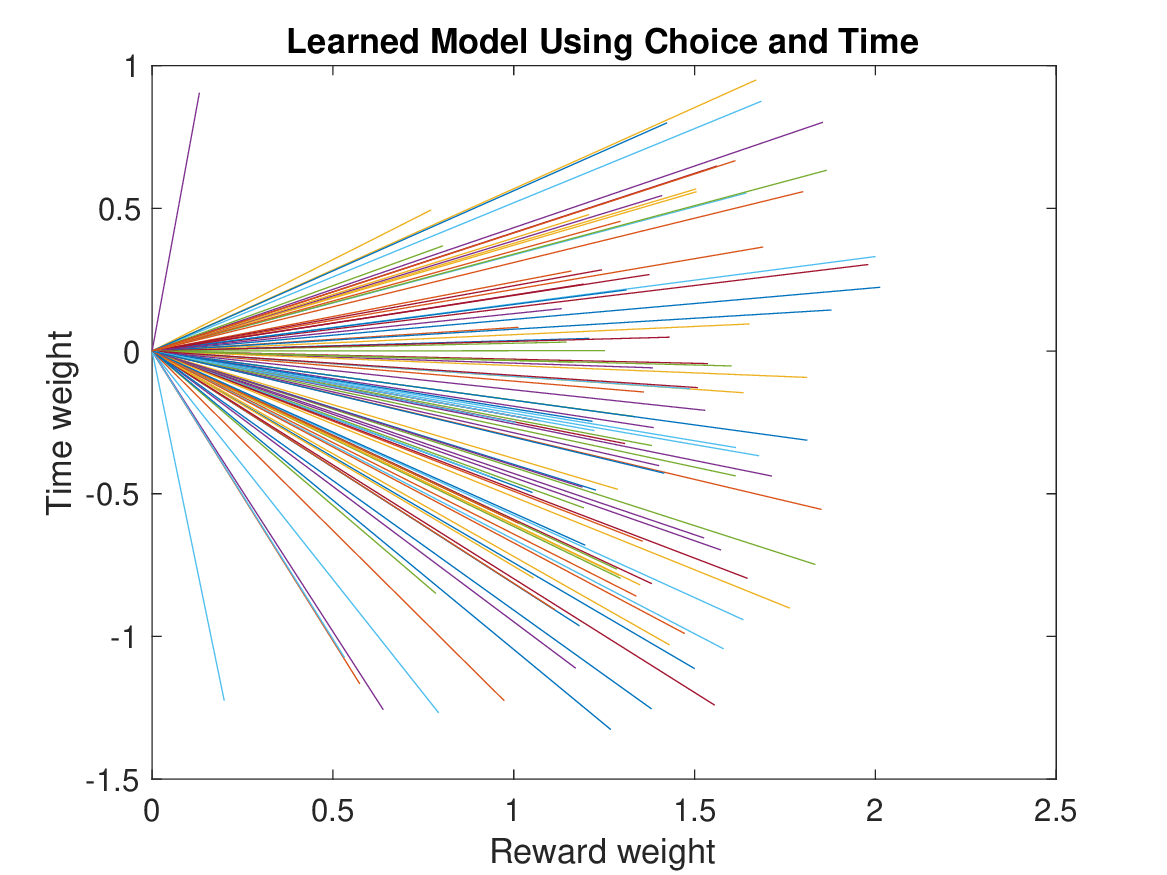}
  \caption{Trained over choice and response-time data}
  \label{fig:model-weights-2}
\end{subfigure}
\caption{Models' weights}
\label{fig:model-weights}
\end{figure}
%%%%%%%%%%%%%%%%%%%%%%%%%%%%%%%%%%%%%%

Given our linear specification, we can compare the parameter estimates and offer an economic interpretation to the estimated values. In particular, if we assume a CARA utility over monetary rewards, we can interpret the implied discount factor in our estimates. Note that the linear utility over dated reward $(R,T)$ is given by 
\begin{equation}
w_r \cdot R - w_t \cdot T,    
\end{equation}
with parameter vector $\bw^* = [w_r, -w_t]^\top$. 
This utility is ordinally equivalent to 
\begin{equation}
\exp(R) \cdot \exp\left(-\frac{w_t}{w_r}\right)^T,    
\end{equation}
where $\exp(R)$ reflects the CARA assumption and the term $\exp\left(-w_t / w_r\right)$ can be interpreted as an exponential discount factor: the smaller this term, the more patient the agent in question. This discount factor provides a way to compare the two trained DDM models (the first trained only on choice data, and the second trained on both choice and response-time data). Specifically, for each agent, we examine the ratio of the discount factor from the second model to that from the first model. The empirical CDF and histogram of this ratio across all 100 agents are shown in \cref{fig:disc-ratio}.

%%%%%%%%%%%%%%%%%%%%%%%%%%%%%%%%%%%%%
\begin{figure}
\centering
\begin{subfigure}{.49\textwidth}
  \centering
  \includegraphics[width=1\linewidth]{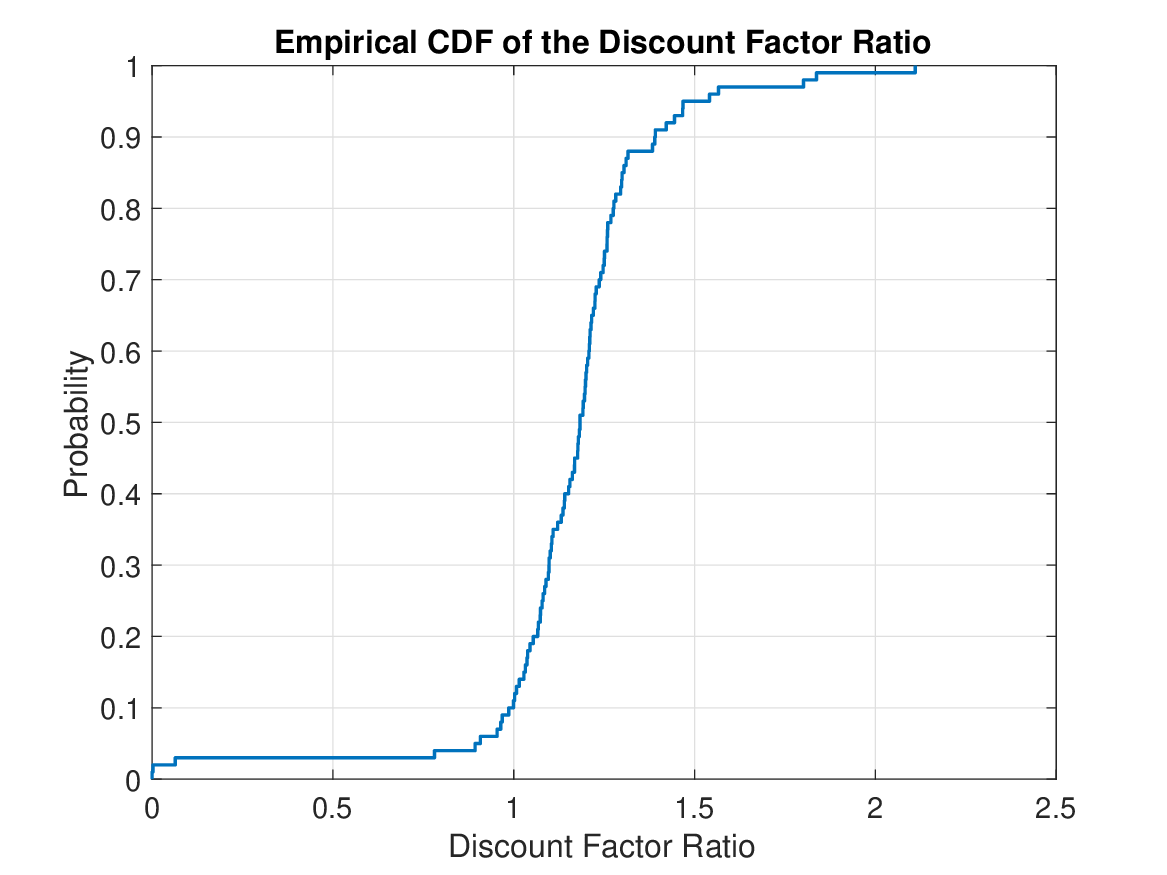}
  \caption{Empirical CDF of the discount factor ratio}
  \label{fig:disc-ratio-CDF} 
\end{subfigure}%%%%%%%
\begin{subfigure}{.49\textwidth}
  \centering
  \includegraphics[width=1\linewidth]{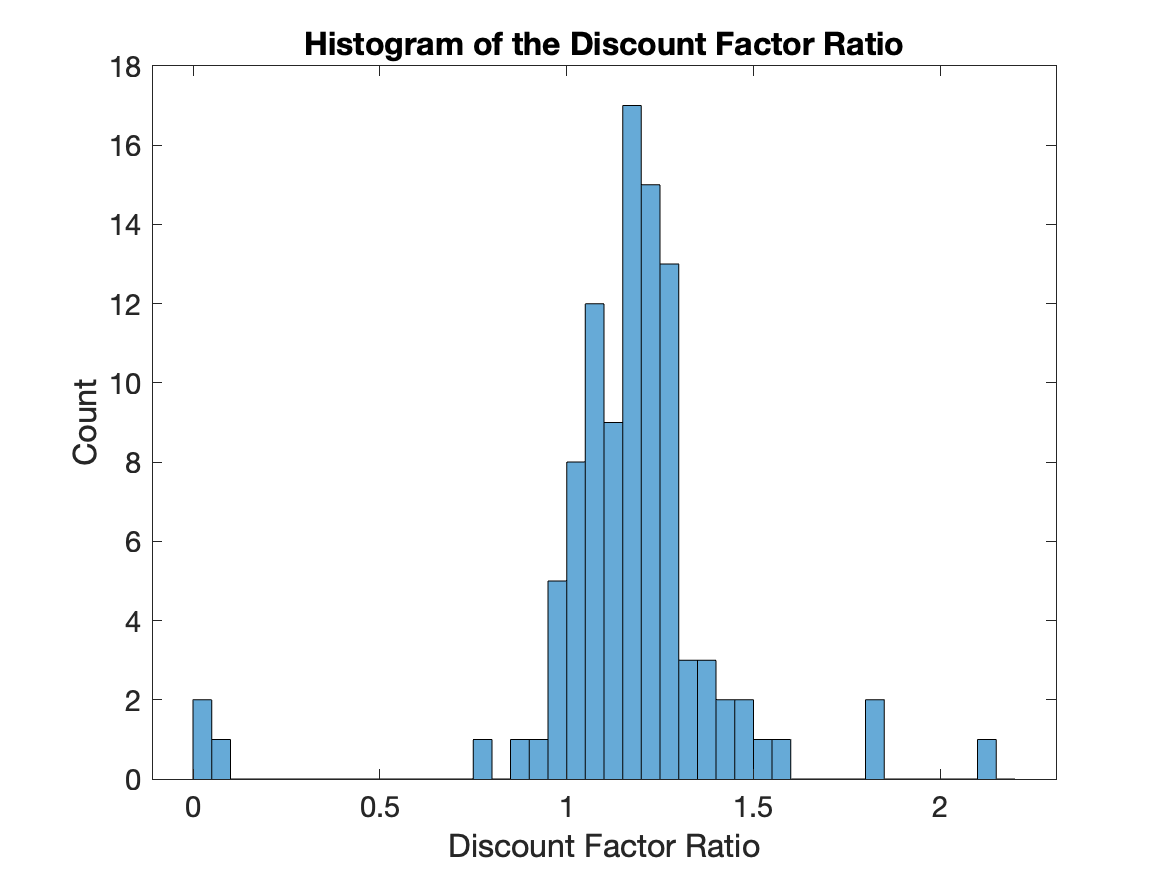}
  \caption{Histogram of the discount factor ratio}
  \label{fig:disc-ratio-hist}
\end{subfigure}
\caption{The discount factor ratio over the two models}
\label{fig:disc-ratio}
\end{figure}
%%%%%%%%%%%%%%%%%%%%%%%%%%%%%%%%%%%%%%
Interestingly, for almost 90\% of agents, the estimated discount factor from the DDM model using response-time data is larger than that from the model using only choice data. The model with choice data also predicts the agents' behavior more accurately, so our findings suggest that the standard methodology in economics that solely relies on choice data leads to a slight underestimation of discount factors.

\section{Proofs}\label{sec:proofs}

\subsection{Proofs From Section~\ref{sec:DDM}}\label{sec:proofsddm}

The following lemma presents the distribution of $z(x,y)$ and $t(x,y)$ under the DDM model. The proof is provided in the appendix.  
%%%%%%%%%%%%%%%
%%%%%%%%%%%%%%%
\begin{lemma} \label{lemma:BM_distributions}
Under the DDM model~\eqref{eqn:DDM_model_original}, the following holds:
\begin{itemize}
\item The distribution of $z(\bx,\by)$ is given by
\begin{equation}
p(z;\bx,\by) = \frac{1}{1+\exp(-2bzv(\bx,\by;\bw^*))}.
\end{equation}
Moreover, the expectation of $z(\bx,\by)$ is given by
\begin{equation}
\mathbb{E}[z(\bx,\by)] = \tanh(bv(\bx,\by;\bw^*)).  
\end{equation}
%%%
\item The distribution of $t(\bx,\by)$, conditioned on $z(\bx,\by) = z$, is given by
\begin{equation}
f(t|z(\bx,\by) = z;\bx,\by) = 
\cosh \left(bv(\bx,\by;\bw^*) \right) \exp\left(-\frac{v(\bx,\by;\bw^*)^2}{2}t\right) \varphi(t),    
\end{equation}
where $\varphi(t)$ denotes the PDF of $T^b$ for the standard Brownian motion (i.e., when the drift is zero), and is given by
\begin{equation}
\varphi(t) = \frac{2b}{\sqrt{2\pi t^3}}\sum_{n=-\infty}^\infty (4n+1) \exp \left(-\frac{(4n+1)^2b^2}{2t} \right).    
\end{equation}
Moreover, the expectation of $t(\bx,\by)$ is given by
\begin{equation}
\mathbb{E}[t(\bx,\by)] = \frac{b}{v(\bx,\by;\bw^*)} \tanh(bv(\bx,\by;\bw^*)).  
\end{equation}
\end{itemize}
%%%%
\end{lemma}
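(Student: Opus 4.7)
My plan is to reduce everything to standard Brownian motion via a Girsanov change of measure, then read off the joint, marginal, and conditional laws by classical optional stopping, leaving only the explicit series representation of $\varphi(t)$ to be handled separately.

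Write $v = v(\bx,\by;\bw^*)$. First, define $\mathbb{Q}$ by the Radon--Nikodym density $\mathrm{d}\mathbb{Q}/\mathrm{d}\mathbb{P}\big|_{\mathcal{F}_t} = \exp(-v B_t - v^2 t/2)$; by Girsanov, under $\mathbb{Q}$ the process $W_t = B_t + v t$ is standard Brownian motion, and $T^b$ inherits exponential moments from $\mathbb{Q}$, so localization of the change of measure at $T^b$ is justified. Applying $\mathrm{d}\mathbb{P}/\mathrm{d}\mathbb{Q}\big|_{\mathcal{F}_{T^b}} = \exp(v W_{T^b} - v^2 T^b/2) = \exp(v b z - v^2 T^b/2)$ (since $W_{T^b} = b z$) to any bounded test function of $(T^b, z)$, and using the fact that under $\mathbb{Q}$ the variables $z$ and $T^b$ are independent with $\mathbb{Q}(z=\pm 1) = 1/2$, I obtain the joint density
\begin{equation*}
p(t,z;\bx,\by) \;=\; \tfrac{1}{2}\exp\!\big(v b z - v^2 t/2\big)\,\varphi(t),
\end{equation*}
where $\varphi$ denotes the exit-time density of standard Brownian motion from $(-b,b)$.

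Next, integrating out $t$ requires the Laplace identity $\int_0^\infty e^{-\lambda^2 t/2}\varphi(t)\,\mathrm{d}t = 1/\cosh(\lambda b)$, which I establish by applying optional stopping to the pair of martingales $\exp(\pm \lambda B_t - \lambda^2 t/2)$ at $T^b$, averaging, and using $B_{T^b}\in\{\pm b\}$. Setting $\lambda = v$ gives $p(z;\bx,\by) = \exp(v b z)/(2\cosh(v b)) = 1/(1+\exp(-2 b z v))$, from which $\mathbb{E}[z] = p(1)-p(-1) = \tanh(bv)$. The conditional density $f(t\mid z;\bx,\by)$ then follows by division: $f(t\mid z;\bx,\by) = \cosh(bv)\exp(-v^2 t/2)\varphi(t)$. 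For the expected response time, another optional-stopping application (to the driftless martingale $W_t - v t = B_t$ at $T^b$, where Wald-type integrability is guaranteed by the exponential moments of $T^b$) yields $v\,\mathbb{E}[T^b] = \mathbb{E}[W_{T^b}] = b\,\mathbb{E}[z] = b\tanh(bv)$, giving the formula for $\mathbb{E}[t(\bx,\by)]$.

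The remaining step, which I expect to be the main obstacle, is verifying the explicit series for $\varphi$. I will use the method of images for the heat equation on $[-b,b]$ with absorbing boundary conditions: the sub-density of $B_t$ killed on exiting $(-b,b)$ is a signed sum of Gaussians centered at $4nb$ (positive weight) and at $4nb+2b$ (negative weight) for $n\in\mathbb{Z}$, which manifestly vanishes at $\pm b$. The total exit density is then the outward probability flux $\varphi(t) = -\tfrac12\partial_x p(t,b) - \tfrac12 \partial_x p(t,-b)$, equal to $-\partial_x p(t,b)$ by symmetry; differentiating the image sum and reindexing $n\mapsto -n$ on the half coming from one family of images combines pairs of terms into the stated series $\frac{2b}{\sqrt{2\pi t^3}}\sum_{n\in\mathbb{Z}}(4n+1)\exp(-(4n+1)^2 b^2/(2t))$. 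As a cross-check, the leading $n=0$ term recovers the single-barrier first-passage density $\frac{b}{\sqrt{2\pi t^3}}\exp(-b^2/(2t))$ in the short-time limit, and Poisson summation confirms that this series has Laplace transform $1/\cosh(\lambda b)$, consistent with the identity used in the marginal calculation.
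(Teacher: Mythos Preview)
Your proposal is correct and uses the same central tool as the paper, namely a Girsanov change of measure to reduce the drifted process to standard Brownian motion. The organization differs: the paper cites Taylor--Karlin for $p(z;\bx,\by)$ and both expectations, applies Girsanov only to pass from the joint density to the conditional density $f(t\mid z)$ via Bayes' rule, and cites Karatzas--Shreve for the series for $\varphi$. You instead derive the joint density directly from Girsanov, then recover $p(z)$ by marginalizing with the Laplace identity (which you prove by optional stopping on the exponential martingales), obtain $\mathbb{E}[t]$ from Wald's identity, and supply a method-of-images derivation of $\varphi$. Your version is more self-contained, which is valuable; the paper's version is terser and shifts the burden to standard references.

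Two small points. First, the outward flux at the two endpoints should read $-\tfrac12\partial_x p(t,b) + \tfrac12\partial_x p(t,-b)$; your final expression $-\partial_x p(t,b)$ is correct once you use the evenness of $p(t,\cdot)$, so this is only a sign slip in the intermediate display. Second, the $n=0$ term of the series is $\frac{2b}{\sqrt{2\pi t^3}}\exp(-b^2/(2t))$, i.e.\ twice the single-barrier first-passage density, reflecting that for short times the process may exit through either barrier; your cross-check statement is off by this factor of two but the Laplace-transform cross-check against $1/\cosh(\lambda b)$ is the cleaner consistency test anyway.
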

%%%%%%%%%%%%%%%
\subsubsection*{Proof of \cref{theorem:linear_DDM}}
%%%%%%%%%%
For a random sample $(\bx,\by,z,t)$, define the following interim variables:
\begin{align}
\ba &:= \frac{\sqrt{t}(\bx-\by)}{b}, \\
\bxi &:= \frac{(\bx-\by)z}{b} - \ba^\top \bw^* \ba 
= \frac{1}{b} \left (z - \frac{t}{b} (\bx-\by)^\top \bw^* \right) (\bx-\by),\\
H &:= \mathbb{E}_{\bx,\by,t} \left [ \ba \ba^\top \right]
= \frac{1}{b^2} ~\mathbb{E}\left[ t(\bx-\by) (\bx-\by)^\top \right].
\end{align}
The result of \cite{bach2013non} establishes that, if we could find $\sigma^2$ and $R^2$ such that
\begin{subequations}\label{eqn:bach_conditions}
\begin{align}
& \mathbb{E}\left[ \bxi \bxi^\top \right] \preceq \sigma^2 H, \label{eqn:bach_condition_1} \\
& \mathbb{E}\left[ \|\ba\|^2 \ba \ba^\top \right] \preceq R^2 H, \label{eqn:bach_condition_2}   
\end{align}    
\end{subequations} 
with $A \preceq B$ meaning that $B-A$ is positive semi-definite, then, by setting $\lambda = 1/(4R^2)$, we have
\begin{equation} \label{eqn:error_loss}
\mathbb{E} \left[ \mathcal{L} \left( \frac{1}{n+1}\sum_{i=0}^{n}\hat{\bw}_i \right) - \mathcal{L}(\bw^*)   \right] 
\leq \frac{2}{n+1} \left ( \sigma \sqrt{d} + R \|\bw^* - \hat{\bw}_0 \| \right) ^2.
\end{equation}
%%%%%%%%%%%%%%%%%%%%
%%%%%%%%%%%%%%%%%%%%
To show the conditions of \eqref{eqn:bach_conditions}, we first prove the following lemma on the Laplace transform of the response time. The proof is deferred to the appendix.
%%%%%
\begin{lemma} \label{lemma:laplace_DDM}
For any $\alpha \geq 0$ and pair of alternatives $(\bx,\by)$, we have
\begin{equation} \label{eqn:laplace_transform}
\mathbb{E}[\exp(-\alpha t(\bx,\by))] = 
\frac{\cosh\left (b(\bx-\by)^\top \bw^* \right)}{\cosh\left (b\sqrt{2\alpha+ ((\bx-\by)^\top \bw^*)^2} \right)}.
\end{equation}
\end{lemma}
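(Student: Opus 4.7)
The plan is to apply optional stopping to an exponential martingale for the drifted Brownian motion $W_t = B_t + vt$, writing $v := (\bx-\by)^\top \bw^*$ throughout. For any $\theta \in \mathbb{R}$, the process $M_t^\theta := \exp\bigl(\theta W_t - (\theta v + \theta^2/2)\,t\bigr)$ is a martingale, as follows from the standard exponential martingale for $B_t$. To extract a Laplace transform at rate $\alpha$, I would choose $\theta$ so that $\theta v + \theta^2/2 = \alpha$; solving the resulting quadratic gives the two roots $\theta_\pm = -v \pm \sqrt{v^2+2\alpha}$.

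Next I would apply optional stopping at $T^b$. Since $|W_s| \le b$ for $s \le T^b$ by the very definition of $T^b$, and since $\alpha \geq 0$, the stopped process $M^\theta_{t \wedge T^b}$ is bounded in absolute value by $e^{|\theta|b}$; combined with $T^b < \infty$ almost surely (from \cref{lemma:BM_distributions}), bounded convergence yields
$$\mathbb{E}\bigl[\exp(\theta W_{T^b} - \alpha T^b)\bigr] = \mathbb{E}[M_0^\theta] = 1.$$
I would then split the expectation by conditioning on $W_{T^b} \in \{b,-b\}$. A crucial input from \cref{lemma:BM_distributions} is that, under the DDM, $T^b$ is independent of $z = W_{T^b}/b$, so with $L(\alpha) := \mathbb{E}[e^{-\alpha T^b}]$ and the choice probabilities $p(\pm 1;\bx,\by) = e^{\pm bv}/(2\cosh(bv))$, the identity collapses to
$$L(\alpha)\cdot\frac{\cosh(b(v+\theta))}{\cosh(bv)} = 1.$$
Substituting either root $\theta = \theta_\pm$ makes $v+\theta = \pm\sqrt{v^2+2\alpha}$, and the evenness of $\cosh$ delivers the stated formula.

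The only technical subtlety is the uniform integrability required by optional stopping, and this is settled cleanly by the boundedness of $W_{T^b\wedge t}$ noted above. An alternative route would be to use the marginal density $f(t;\bx,\by) = \cosh(bv)\,e^{-v^2 t/2}\,\varphi(t)$ already identified in \cref{lemma:BM_distributions}, integrate against $e^{-\alpha t}$, and invoke the classical identity $\int_0^\infty e^{-\beta t}\,\varphi(t)\,dt = 1/\cosh(b\sqrt{2\beta})$ for the exit time of driftless Brownian motion from $[-b,b]$. Both approaches yield the same closed form; I would favor the martingale argument because it does not require importing the Laplace transform of $\varphi$ as a separate ingredient, and because the algebra of matching the two roots $\theta_\pm$ to the even function $\cosh$ is transparent.
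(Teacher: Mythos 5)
Your argument is correct, and it takes a genuinely different route from the paper's. The paper proves the lemma by brute-force integration: it takes the density $f(t\mid z;\bx,\by)=\cosh(bv)\,e^{-v^2t/2}\,\varphi(t)$ from \cref{lemma:BM_distributions}, multiplies by $e^{-\alpha t}$, and recognizes the remaining integral $\int_0^\infty e^{-(\alpha+v^2/2)t}\varphi(t)\,dt$ as the classical Laplace transform $1/\cosh\bigl(b\sqrt{2\alpha+v^2}\bigr)$ of the driftless exit time, which it imports as a known fact --- precisely the ``alternative route'' you sketch at the end. Your primary argument instead runs optional stopping on the exponential martingale $\exp(\theta B_t-\theta^2t/2)$ at $T^b$, and all the steps check out: the stopped process is bounded by $e^{|\theta|b}$ because $|W_{s}|\le b$ up to $T^b$ and $\alpha\ge 0$, so the identity $\mathbb{E}[\exp(\theta W_{T^b}-\alpha T^b)]=1$ is justified; the factorization into $L(\alpha)\cdot\cosh(b(v+\theta))/\cosh(bv)$ correctly uses the independence of $T^b$ and $z$ and the probabilities $p(\pm1)=e^{\pm bv}/(2\cosh(bv))$, both legitimately available from \cref{lemma:BM_distributions}; and either root $\theta_\pm=-v\pm\sqrt{v^2+2\alpha}$ gives the stated formula by evenness of $\cosh$. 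What your approach buys is self-containment --- the classical identity for $\varphi$ that the paper cites from \cite{karatzas1991brownian} is itself usually proved by exactly this martingale computation, so you derive the drifted case in one pass rather than reducing to the driftless case as a black box. What the paper's route buys is brevity and uniformity with the rest of Section~\ref{sec:proofs}, which already manipulates $\varphi$ explicitly (e.g., in the moment bounds of \cref{theorem:generalization_DDM}). A minor remark: you do not strictly need the independence of $T^b$ and $z$; plugging in both roots $\theta_+$ and $\theta_-$ yields two linear equations in the conditional transforms $\mathbb{E}[e^{-\alpha T^b}\mathbbm{1}_{z=\pm1}]$, which would let you recover the joint law and hence re-derive that independence rather than assume it.
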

%%%%%%%%%%%%%%%
%%%%%%%%%%%%
This lemma allows us to compute the second moment of the response time.
%%%%%%%%%%%
\begin{lemma} \label{lemma:response_time_second_moment}
For any pair of alternatives $(\bx,\by)$, we have
\begin{equation}
\mathbb{E}[t(\bx,\by)^2] = \frac{b\sech^2(bv) \left(-3bv +\sinh(2bv)+bv \cosh(2bv) \right)}{2v^3}, 
\end{equation}
with $v = (\bx-\by)^\top \bw^*$.
\end{lemma}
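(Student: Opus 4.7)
The natural approach is to extract $\mathbb{E}[t^2]$ from the explicit Laplace transform of \cref{lemma:laplace_DDM} via
\begin{equation*}
\mathbb{E}[t(\bx,\by)^2] \;=\; \frac{d^2}{d\alpha^2}\,\mathbb{E}\!\left[e^{-\alpha\,t(\bx,\by)}\right]\bigg|_{\alpha=0^+},
\end{equation*}
which is legitimate because the right-hand side of \eqref{eqn:laplace_transform} is visibly analytic in $\alpha$ on a neighborhood of $0$ (the denominator $\cosh(b\sqrt{2\alpha+v^2})$ never vanishes for real $\alpha \ge 0$). Writing $v := (\bx-\by)^\top\bw^*$ and introducing $u(\alpha) := \sqrt{2\alpha+v^2}$ so that $u'(\alpha) = 1/u$ and $u(0) = |v|$, the Laplace transform takes the convenient form $M(\alpha) = \cosh(bv)\,\sech(bu(\alpha))$.

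A first application of the chain rule gives
\begin{equation*}
M'(\alpha) \;=\; -\,\frac{b\cosh(bv)\,\sinh(bu)}{u\,\cosh^2(bu)},
\end{equation*}
and evaluating at $\alpha=0$ recovers $-\mathbb{E}[t] = -(b/v)\tanh(bv)$, matching \cref{lemma:BM_distributions}: a useful sanity check that the computation is on track. Differentiating once more with respect to $\alpha$ (again using $u'=1/u$) produces a rational combination of $\sinh(bu)$ and $\cosh(bu)$; setting $\alpha = 0$ and using the double-angle identities $2\sinh(bv)\cosh(bv) = \sinh(2bv)$ and $\cosh^2(bv) + \sinh^2(bv) = \cosh(2bv)$ collapses the expression into the stated closed form
\begin{equation*}
\frac{b\,\sech^2(bv)\bigl(-3bv + \sinh(2bv) + bv\cosh(2bv)\bigr)}{2v^{3}}.
\end{equation*}

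The main obstacle is purely algebraic bookkeeping: the chain rule produces several terms with different $\cosh(bu)$ powers in the denominator, and these must be combined before the double-angle identities can be invoked cleanly. A secondary subtlety worth flagging is that the apparent $v^{-3}$ singularity is removable, since the numerator $-3bv + \sinh(2bv) + bv\cosh(2bv)$ has Taylor expansion $O(v^3)$ around $v = 0$; in particular the formula extends continuously to $v = 0$ (where it gives the Brownian first-passage variance) and is even in $v$, so the sign of $v$ is immaterial.
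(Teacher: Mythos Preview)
Your proposal is correct and follows exactly the paper's approach: the paper's proof consists of a single sentence stating that the result follows from taking the second derivative of the Laplace transform \eqref{eqn:laplace_transform} in $\alpha$ and setting $\alpha=0$, with a footnote deferring the justification of interchanging expectation and differentiation to \cref{sec:Rademacher_DDM}. Your write-up is in fact more detailed than the paper's---you carry out the chain-rule computation explicitly, verify the first derivative against $\mathbb{E}[t]$ from \cref{lemma:BM_distributions}, and note the removable singularity at $v=0$---but the method is identical.
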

%%%%%%%%%%%
The proof follows from taking the second derivative of the Laplace transform function \eqref{eqn:laplace_transform} as a function of $\alpha$ and setting $\alpha=0$.\footnote{Note that we are in changing the order of the expectation and the derivative. We defer making this step rigorous at this point; we will instead formalize it for all moments of $t$ in \cref{sec:Rademacher_DDM}} 
%%%%%%%%%%%%%%
%%%%%%%%%%%%%%%
We next investigate the conditions of \eqref{eqn:bach_conditions}. 

\paragraph{Condition \eqref{eqn:bach_condition_1}:} We claim this condition holds, in fact, as an equality, with $\sigma = 1/b$. To see this, notice that 
\begin{align}
& \mathbb{E}\left[ \bxi \bxi^\top \right] =  \frac{1}{b^2} \mathbb{E} \left [
\left(z - \frac{t}{b} (\bx-\by)^\top \bw^* \right)^2 (\bx-\by) (\bx-\by)^\top \right ] \\
&\quad = \frac{1}{b^2} \mathbb{E} \left[ 
\left ( 1 - \frac{2tz}{b} (\bx-\by)^\top \bw^* + \frac{t^2}{b^2} \left( (\bx-\by)^\top \bw^* \right)^2 \right ) (\bx-\by) (\bx-\by)^\top
\right ] \\
&\quad = \frac{1}{b^2} \mathbb{E}_{(\bx,\by)\sim \mu} \left[ 
\left ( 1 - \frac{1}{b^2}\left( 2\mathbb{E}[t(\bx,\by)]^2 - \mathbb{E}[t(\bx,\by)^2] \right) \left( (\bx-\by)^\top \bw^* \right)^2\right ) (\bx-\by) (\bx-\by)^\top
\right ], \label{eqn:condition_1_1}
\end{align}
where the last equality uses the fact that, conditional on $(\bx, \by)$, $z(\bx, \by)$ and $t(\bx, \by)$ are independent, and that $\mathbb{E}[z(\bx,\by)] = \mathbb{E}[t(\bx,\by)](\bx-\by)^\top \bw^*/b$. 
On the other hand, we have
\begin{equation}
H 
= \frac{1}{b^2} ~\mathbb{E}_{(\bx,\by)\sim \mu} \left[  \mathbb{E}[t(\bx,\by)] (\bx-\by) (\bx-\by)^\top \right].   
\end{equation}
To show $\mathbb{E}\left[ \bxi \bxi^\top \right] = H/b^2$, it suffices to show that
\begin{equation}
1 - \frac{1}{b^2}\left( 2\mathbb{E}[t(\bx,\by)]^2 - \mathbb{E}[t(\bx,\by)^2] \right) \left( (\bx-\by)^\top \bw^* \right)^2 = \frac{1}{b^2} \mathbb{E}[t(\bx,\by)].    
\end{equation}
To see why this holds, we substitute $\mathbb{E}[t(\bx, \by)]$ and $\mathbb{E}[t(\bx, \by)^2]$ from \cref{lemma:BM_distributions} and \cref{lemma:response_time_second_moment}, respectively. This reduces the problem to establishing the following identity:
\begin{equation} \label{eqn:condition_1_2}
1 + \frac{\sech^2(\beta) \left( -3\beta + \sinh(2\beta) + \beta \cosh(2\beta)\right)}{2\beta} =   
2\tanh^2(\beta) + \frac{\tanh(\beta)}{\beta},
\end{equation}
with $\beta := b(\bx-\by)^\top \bw^*$. One can verify that this identity holds for any $\beta$ which completes the proof of this part.
%%%%%%%%%%%%%%%%%%%%%%%%
\paragraph{Condition \eqref{eqn:bach_condition_2}:} We claim this condition holds with $R^2= 2D^2$.
To see this, note that
\begin{align}
\mathbb{E}\left[ \|\ba\|^2 \ba \ba^\top \right] &= 
\frac{1}{b^4} \mathbb{E}_{(\bx,\by)\sim \mu} \left [ \mathbb{E}[t(\bx,\by)^2] \|\bx-\by\|^2 (\bx-\by) (\bx-\by)^\top \right] \\
& \preceq \frac{D^2}{b^4} \mathbb{E}_{(\bx,\by)\sim \mu} \left [ \mathbb{E}[t(\bx,\by)^2] (\bx-\by) (\bx-\by)^\top \right], \label{eqn:condition_2_1}
\end{align}
where the last inequality follows from fact that $\|\bx-\by\| \leq D$. Next, note that, for every $(\bx,\by)$, we have
\begin{align}
\frac{b^2 \mathbb{E}[t(\bx,\by)]}{\mathbb{E}[t(\bx,\by)^2]} &= 
\frac{2\beta^2 \tanh(\beta)}{\sech^2(\beta) \left(-3\beta +\sinh(2\beta)+\beta \cosh(2\beta) \right)}.
\end{align}
One can verify that the right-hand side is lower bounded by $0.6$, which plugging it into \eqref{eqn:condition_2_1} yields
\begin{equation}
\mathbb{E}\left[ \|\ba\|^2 \ba \ba^\top \right] \preceq
2 \frac{D^2}{b^2} \mathbb{E}_{(\bx,\by)\sim \mu} \left [ \mathbb{E}[t(\bx,\by)] \|\bx-\by\|^2 (\bx-\by) (\bx-\by)^\top \right] = 2D^2 H,
\end{equation}
which completes the proof of the second condition.
%%%%%%%%%%%%
%%%%%%%%%%%%%%
\paragraph{Bounding the matrix norm:}
Hence, we have established \eqref{eqn:error_loss} with $\sigma^2=1/b^2$ and $R^2=2D^2$. Note that $\mathcal{L}(\cdot)$ is given by
\begin{align}
\mathcal{L}(\bw)&= \mathbb{E}_{\bx,\by,z,t}\left[ \frac{t}{2b^2} \bw^\top (\bx-\by) (\bx-\by)^\top \bw - \frac{z}{b} (\bx-\by)^\top \bw \right ]  \nonumber \\
&= \bw^\top \Sigma_1 \bw - \mathbb{E}_{\bx,\by,z} \left[ \frac{z}{b} (\bx-\by)^\top \right] \bw, \label{eqn:quadratic_loss}
\end{align}
with
\begin{equation}
\Sigma_1 := \frac{1}{2b^2}\mathbb{E}_{\bx,\by,t}\left[t(\bx-\by) (\bx-\by)^\top \right] 
= \frac{1}{2b^2} \mathbb{E}_{\bx,\by}\left[\mathbb{E}[t(\bx,\by)](\bx-\by) (\bx-\by)^\top \right].  
\end{equation}
Next, notice that by the definition of $\bw^*$, we have
\begin{equation}
2\Sigma_1 \bw^* = \mathbb{E}_{\bx,\by,z} \left[ \frac{z}{b} (\bx-\by)\right]. 
\end{equation}
Substituting this into \eqref{eqn:quadratic_loss} implies
\begin{equation}
\mathcal{L}(\bw) = \bw^\top \Sigma_1 \bw - 2 \bw^\top \Sigma_1 \bw^*.    
\end{equation}
In the special case $\bw = \bw^*$, this yields
\begin{equation}
\mathcal{L}(\bw^*) = - {\bw^*}^\top \Sigma_1 \bw^*.  
\end{equation}
As a result, we have
\begin{equation}
\mathcal{L}(\bw) - \mathcal{L}(\bw^*) = \|\bw-\bw^*\|_{\Sigma_1}^2.
\end{equation}
Therefore, we have established
\begin{equation} \label{eqn:error_matrix_norm_1}
\mathbb{E} \left [ \left \| \frac{1}{n+1}\sum_{i=0}^{n}\hat{\bw}_i - \bw^*
\right \|_{\Sigma_1}^2 \right]   \leq
\frac{4}{n+1} \left (\frac{d}{b^2} + 2D^2 \|\hat{\bw}_0 - \bw^*\|^2 \right).
\end{equation}
Note that we are seeking a result in terms of a matrix norm with respect to the matrix $\Sigma$, whereas the current expression involves the matrix $\Sigma_1$. To translate this into a matrix norm with respect to $\Sigma$, notice that
\begin{equation} \label{eqn:lower-bound-Etime}
\mathbb{E}[t(\bx,\by)] = b^2 \frac{\tanh(\beta)}{\beta}
\geq \frac{b^2}{\sqrt{1+\beta^2}},
\end{equation}
where, recall, $\beta = b(\bx - \by)^\top \bw^*$; see also \cite{bagul2022tight} for a lower bound on the hyperbolic tangent function.
This implies
\begin{equation}
\Sigma_1 \succeq \frac{1}{2\sqrt{1+b^2 D^2\|\bw^*\|^2}} \Sigma,
\end{equation}
which allows us to translate \eqref{eqn:error_matrix_norm_1} into a result involving the matrix norm with respect to $\Sigma$, thereby completing the proof of \cref{theorem:linear_DDM}.
%%%%%%%%%%%%%%%%%%%%%%%%%%%%%%
%%%%%%%%%%%%%%%%%%%%%%%%%%%%%%
%%%%%%%%%%%%%%%%%%%%%%%%%%%%%%
\subsubsection*{Proof of \cref{theorem:generalization_DDM}}
Let 
\begin{equation}
\ell(\bx,\by,z,t;\bw) := \frac{t}{2} \left( \frac{v(\bx,\by;\bw)}{b}\right)^2- z \frac{v(\bx,\by;\bw)}{b}.   
\end{equation}
Note that $\mathcal{L}(\cdot)$ is the expectation of $\ell(\bx, \by, z, t; \bw)$. Recall that the Rademacher complexity of the function $\ell$ with respect to $\mathcal{S}$ is defined as as
\begin{equation} 
\mathcal{R}_n \Big( \{\ell(\cdot;\bw)\}_{\bw} \circ \mathcal{S} \Big):= 
\frac{1}{n} \mathbb{E}_{\bomega}\left[ \sup_{\bw \in \mathcal{W}} \sum_{i=1}^n \omega_i \ell(\bx_i,\by_i,z_i,t_i;\bw) \right].
\end{equation}
As stated in \citet[Theorem 26.3]{shalev2014understanding}, we have
\begin{equation}\label{eqn:Rademacher-to-generalization}
\mathbb{E}_{\mathcal{S}} \left [ \sup_{\bw \in \mathcal{W}} \left(\mathcal{L}(\bw) - \hat{\mathcal{L}}(\bw) \right) \right]  \leq 
2 \mathbb{E}_{\mathcal{S}} \left [ \mathcal{R}_n \Big( \{\ell(\cdot;\bw)\}_{\bw} \circ \mathcal{S} \Big) \right].
\end{equation}
Next, note that we have
\begin{align}
\mathcal{R}_n \Big( \{\ell(\cdot;\bw)\}_{\bw} \circ \mathcal{S} \Big)  & \leq
\mathcal{R}_n \left (\left \{ \frac{t}{2} \left( \frac{v(\bx,\by;\bw)}{b}\right)^2 \right\}_{\bw} \circ \mathcal{S} \right)  
+ \mathcal{R}_n \left ( \left \{  z \frac{v(\bx,\by;\bw)}{b}\right\}_{\bw} \circ \mathcal{S} \right) \label{eqn:proof-rademacher-sup} \\
& \leq \frac{\max_{i} t_i}{2b^2}  ~\mathcal{R}_n \Big (\{v(\cdot, \cdot; \bw)^2\}_{\bw} \circ (\bx_i,\by_i)_{i=1}^n \Big)  
+ \frac{1}{b} \mathcal{R}_n \Big (\{v(\cdot, \cdot; \bw)\}_{\bw} \circ (\bx_i,\by_i)_{i=1}^n \Big) \label{eqn:proof-rademacher-contraction} \\ 
& \leq \left (\frac{K}{b^2} \max_{i} t_i + \frac{1}{b} \right) \mathcal{R}_n \Big (\{v(\cdot, \cdot; \bw)\}_{\bw} \circ (\bx_i,\by_i)_{i=1}^n \Big) \label{eqn:proof-rademacher-contraction-2},
\end{align}
where \eqref{eqn:proof-rademacher-sup} follows from upper bounding the supremum of the sum of two functions by the sum of their suprema, and \eqref{eqn:proof-rademacher-contraction} follows from applying the contraction property of Rademacher complexity \citep[Lemma 26.9]{shalev2014understanding} to the functions $x \mapsto t_i x$ and $x \mapsto z_i x$ (applied on the $i$-th entry). Finally, \eqref{eqn:proof-rademacher-contraction-2} also follows from the contraction property, this time applied to the function $x \mapsto x^2$, which is $2K$-Lipschitz on the interval $[-K, K]$. 

We can now substitute \eqref{eqn:proof-rademacher-contraction-2} into \eqref{eqn:Rademacher-to-generalization}, and it remains to bound the expectation of $\max_i t_i$, conditioned on $(\bx_i,\by_i)_{i=1}^n$. To do so, note that, for any $q \geq 1$, we have
\begin{align}
\left (\mathbb{E} \left [\max_{i} t_i \Big | (\bx_i,\by_i)_{i=1}^n \right] \right)^q & \leq     
\mathbb{E} \left[ (\max_{i} t_i)^q \Big | (\bx_i,\by_i)_{i=1}^n \right] \label{eqn:max_time_1} \\
&\leq \mathbb{E} \left[ \sum_{i=1}^n t_i^q \Big | (\bx_i,\by_i)_{i=1}^n \right]  = \sum_{i=1}^n \mathbb{E}[t(\bx_i,\by_i)^q], \label{eqn:max_time_2}
\end{align}
where \eqref{eqn:max_time_1} follows from Jensen's inequality. As a result, we have
\begin{equation} \label{eqn:max-time-3}
\mathbb{E} \left [\max_{i} t_i \Big | (\bx_i,\by_i)_{i=1}^n \right] \leq \left( \sum_{i=1}^n \mathbb{E}[t(\bx_i,\by_i)^q] \right)^{1/q}.    
\end{equation}
We now need to derive upper bounds for the moments of $t(\bx, \by)$. Let $t_0$ denote the random variable corresponding to the driftless DDM, which is essentially the first time a standard Brownian motion hits either $b$ or $-b$. We make the following claim:
%%%%%%%%%%%%
\begin{claim} \label{claim:max-time-driftless}
For any pair of alternatives $(\bx,\by)$ and any $q \geq 1$, we have 
$\mathbb{E}[t(\bx,\by)^q] \leq \mathbb{E}[t_0^q]$.
\end{claim}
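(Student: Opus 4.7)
The plan is to read off an explicit form of the density of $t(\bx,\by)$ from \cref{lemma:BM_distributions} and exploit monotonicity. Since the conditional density of $t(\bx,\by)$ given $z(\bx,\by)$ does not depend on the realized value of $z$, the marginal density of $t(\bx,\by)$ is simply
\begin{equation}
f(t;\bx,\by) \;=\; \cosh(bv)\,\exp(-v^2 t/2)\,\varphi(t),
\end{equation}
with $v := v(\bx,\by;\bw^*)$ and $\varphi$ the density of $t_0$. In particular, the likelihood ratio $f(t;\bx,\by)/\varphi(t) = \cosh(bv)\exp(-v^2 t/2)$ is non-increasing in $t$, which is the structural fact I will exploit.

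Writing both moments as integrals against the common reference density $\varphi$, the target inequality becomes
\begin{equation}
\mathbb{E}[t_0^q] - \mathbb{E}[t(\bx,\by)^q] \;=\; \int_0^\infty t^q \bigl(1 - \cosh(bv)\, e^{-v^2 t/2}\bigr)\varphi(t)\,dt \;\geq\; 0.
\end{equation}
Set $g(t) := 1 - \cosh(bv)\, e^{-v^2 t/2}$. Because both $f(t;\bx,\by)$ and $\varphi(t)$ are probability densities, integration yields $\int_0^\infty g(t)\varphi(t)\,dt = 0$. Moreover, $g$ is non-decreasing in $t$, since its derivative is $\cosh(bv)(v^2/2)e^{-v^2 t/2}\geq 0$ (the case $v=0$ is trivial, with $g\equiv 0$); and $t\mapsto t^q$ is non-decreasing for any $q \geq 1$.

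The conclusion then follows from Chebyshev's sum (correlation) inequality applied under the probability measure $\varphi(t)\,dt$: for any two non-decreasing integrable functions $\phi_1,\phi_2$,
\begin{equation}
\int \phi_1(t)\phi_2(t)\,\varphi(t)\,dt \;\geq\; \Bigl(\int \phi_1(t)\,\varphi(t)\,dt\Bigr)\Bigl(\int \phi_2(t)\,\varphi(t)\,dt\Bigr).
\end{equation}
Choosing $\phi_1(t)=t^q$ and $\phi_2(t)=g(t)$, the right-hand side vanishes because $\int g\,\varphi = 0$, yielding exactly the desired inequality. I expect no substantive obstacle: the only side check is finiteness of $\mathbb{E}[t_0^q]$, which follows from the exponentially decaying tail of $\varphi$ (the hitting time of a bounded interval by standard Brownian motion has finite moments of all orders, e.g.\ from the Laplace transform $1/\cosh(b\sqrt{2\alpha})$ computed in \cref{lemma:laplace_DDM} with $v=0$).
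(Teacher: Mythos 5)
Your proof is correct and rests on the same structural fact as the paper's: the likelihood ratio between the density of $t(\bx,\by)$ and that of $t_0$ is monotone in $t$. The only difference is the final step --- the paper invokes the standard chain ``monotone likelihood ratio $\Rightarrow$ first-order stochastic dominance $\Rightarrow$ ordering of expectations of the increasing function $t^q$,'' whereas you prove that implication directly and self-containedly by applying Chebyshev's correlation inequality under $\varphi(t)\,dt$ to $t^q$ and the mean-zero non-decreasing function $g$.
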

\begin{proof}[Proof of Claim \ref{claim:max-time-driftless}]
Note that the distribution of $t_0$ is $\varphi(\cdot)$. Hence, by \cref{lemma:BM_distributions}, the ratio of the PDF of $t_0$ over the PDF of $t(\bx, \by)$ is given by
\begin{equation}
\frac{\exp\left(v(\bx,\by;\bw^*)^2~t/2\right)}{\cosh \left(bv(\bx,\by;\bw^*) \right) },    
\end{equation}
which is increasing in $t$. This implies that the monotone likelihood ratio property holds, which in turn implies that $t_0$ first-order stochastically dominates $t(\bx, \by)$. This immediately establishes the claim.
\end{proof}
Using this claim along with \eqref{eqn:max-time-3} yields
\begin{equation} \label{eqn:max-time-4}
\mathbb{E} \left [\max_{i} t_i \Big | (\bx_i,\by_i)_{i=1}^n \right] \leq n^{1/q}~\mathbb{E}[t_0^q]^{1/q}.    
\end{equation}
Next, we focus on bounding the moments of $t_0$. Recall from \cref{lemma:laplace_DDM} that the Laplace transform of $t_0$ is given by
\begin{equation}
\mathbb{E} \left[ \exp(- \alpha t_0) \right] =\frac{1}{\cosh(b\sqrt{2\alpha})}.   
\end{equation}
We now make the following claim.
\begin{claim} \label{claim:moments-laplace-transform}
For any positive integer $q$, we have    
\begin{equation}
\mathbb{E}[t_0^q] = (-1)^q \frac{d^q}{d\alpha^q} \frac{1}{\cosh(b\sqrt{2\alpha})} \bigg|_{\alpha=0}.   
\end{equation}
\end{claim}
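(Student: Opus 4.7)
My plan is to prove this as an instance of the standard fact that moments of a nonnegative random variable equal the derivatives at zero of its Laplace transform, once we know the Laplace transform is sufficiently smooth there. The crux is to justify interchanging differentiation and expectation to all orders at $\alpha = 0$.

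First, I would establish that $t_0$ has a finite exponential moment: there exists $\alpha_0 > 0$ with $\mathbb{E}[e^{\alpha_0 t_0}] < \infty$. This is a classical property of the first exit time of standard Brownian motion from a bounded interval. For instance, the spectral decomposition of the Dirichlet Laplacian on $[-b,b]$ implies that the density of $t_0$ decays like $\exp(-\pi^2 t / (8 b^2))$ as $t \to \infty$, so any $\alpha_0 < \pi^2 / (8 b^2)$ works. Equivalently, one can observe that the Laplace transform $1/\cosh(b\sqrt{2\alpha})$ extends analytically from $\alpha \geq 0$ down to its first singularity at $\alpha = -\pi^2/(8 b^2)$ (arising from the first zero of $\cosh$ on the imaginary axis), which yields the same exponential tail decay.

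Given this exponential moment, the interchange is routine. Shrink $\alpha_0$ at the outset so that $3\alpha_0 / 4 < \pi^2/(8 b^2)$, ensuring $\mathbb{E}[e^{3 \alpha_0 t_0 / 4}] < \infty$. For any $\alpha$ with $|\alpha| \leq \alpha_0 / 2$ and any positive integer $q$, the pointwise bound
\begin{equation}
t_0^q \, e^{-\alpha t_0} \leq t_0^q \, e^{\alpha_0 t_0 / 2} \leq C_q \, e^{3 \alpha_0 t_0 / 4}
\end{equation}
holds (since $s^q \leq C_q e^{\alpha_0 s / 4}$ for $s$ sufficiently large, and the contribution from bounded $t_0$ is trivially dominated). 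The dominating random variable has finite expectation, so iterating dominated convergence $q$ times yields
\begin{equation}
\frac{d^q}{d\alpha^q} \mathbb{E}[e^{-\alpha t_0}] = (-1)^q \, \mathbb{E}[t_0^q \, e^{-\alpha t_0}]
\end{equation}
on the neighborhood $(-\alpha_0/2, \alpha_0/2)$. Evaluating at $\alpha = 0$ and substituting $\mathbb{E}[e^{-\alpha t_0}] = 1/\cosh(b\sqrt{2\alpha})$ from Lemma \ref{lemma:laplace_DDM} (the zero-drift specialization) produces the claimed identity.

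The main obstacle is step 1, verifying the exponential moment. Because only qualitative finiteness is needed, not a sharp decay rate, any of the standard approaches suffices: the spectral decomposition argument sketched above, a Jacobi-theta transformation of the series for $\varphi(t)$ in Lemma \ref{lemma:BM_distributions} that converts the small-$t$ form to one manifestly decaying exponentially as $t\to\infty$, or direct appeal to the analytic continuation of the Laplace transform past zero. The remaining manipulations are then routine applications of dominated convergence.
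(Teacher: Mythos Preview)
Your proof is correct, but it takes a different route from the paper's. The paper avoids invoking any exponential moment of $t_0$: it first fixes $\alpha_0>0$ and observes that $t_0^q e^{-\alpha t_0}$ is uniformly bounded for $\alpha$ in a right neighborhood of $\alpha_0$ (trivially, since $t\mapsto t^q e^{-\alpha t}$ is bounded on $[0,\infty)$ for each $\alpha>0$), so dominated convergence gives the identity at every $\alpha_0>0$. It then sends $\alpha_0\downarrow 0$: monotone convergence handles the left side ($t_0^q e^{-\alpha_0 t_0}\uparrow t_0^q$), and smoothness of $1/\cosh(b\sqrt{2\alpha})$ at the origin handles the right side. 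Your approach instead front-loads the analytic input by establishing $\mathbb{E}[e^{\alpha_0 t_0}]<\infty$ (via the pole of the Laplace transform at $-\pi^2/(8b^2)$ or the spectral gap), which then lets a single dominated-convergence step work directly on a two-sided neighborhood of $0$. The paper's argument is slightly more elementary in that it needs no tail estimate on $t_0$; yours is the more standard ``moments from the moment generating function'' template and yields the stronger conclusion that the derivative identity holds on an open interval around $0$, not just at $0$ from the right.
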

%%%%
\begin{proof}[Proof of Claim \ref{claim:moments-laplace-transform}]
This essentially states that we can interchange the order of differentiation and expectation. For any $\alpha_0 > 0$ and positive $q$, since $t_0^q \exp(-\alpha t_0)$ is uniformly bounded in a neighborhood around $\alpha_0$, this interchange is allowed by the Leibniz integral rule (or, basically, the dominated convergence theorem). In other words, for any $\alpha_0 > 0$, we have
\begin{equation}
\mathbb{E} \left[ t_0^q \exp(-\alpha_0 t^q) \right] = (-1)^q \frac{d^q}{d\alpha^q} \frac{1}{\cosh(b\sqrt{2\alpha})} \bigg|_{\alpha=\alpha_0}.    
\end{equation}
That this holds at $\alpha_0 = 0$ (which is the case relevant to our claim) follows from the monotone convergence theorem.
\end{proof}
%%%%%%%%
Therefore, to compute the moments of $t_0$, we consider the Taylor series expansion of $1/\cosh(b\sqrt{2\alpha})$, which is given by
\begin{equation}
\frac{1}{\cosh(b\sqrt{2\alpha})} = \sum_{k=0}^\infty \frac{2^k~ b^{2k}~E_{2k}}{(2k)!} \alpha^k,     
\end{equation}
for $\alpha \in [0, \pi^2/(8b^2))$, where $E_{2k}$ denotes the $2k$-th Euler number \citep[Chapter 23]{abramowitz1965handbook}. Thus, we have
\begin{equation} \label{eqn:moment-bound-1}
\mathbb{E}[t_0^q] = (-1)^q \frac{2^q~b^{2q}~q!~E_{2q}}{(2q)!}    
\leq \frac{2^{2q+3}~b^{2q}~ q!}{\pi^{2q+1}} 
\leq \frac{2^{2q+3}~b^{2q}~ \sqrt{2\pi~q}q^q}{e^q~\pi^{2q+1}}, 
\end{equation}
where the first inequality follows from an upper bound on the Euler numbers \citep[Chapter 23]{abramowitz1965handbook} and the second inequality follows from Stirling's approximation. Substituting \eqref{eqn:moment-bound-1} into \eqref{eqn:max-time-4} implies 
\begin{equation} \label{eqn:max-time-5}
\mathbb{E} \left [\max_{i} t_i \Big | (\bx_i,\by_i)_{i=1}^n \right] \leq n^{1/q}~\frac{8b^2q}{e\pi^2} (8\sqrt{2q}/\sqrt{\pi})^{1/q}.    
\end{equation}
Lastly, setting $q = \log(n)$ and substituting the resulting bound into \eqref{eqn:proof-rademacher-contraction-2}, and then plugging the entire expression into \eqref{eqn:Rademacher-to-generalization}, completes the proof of \eqref{eqn:DDM_generalization_bound}. The proof of \eqref{eqn:DDM_generalization_bound_2} follows directly from the standard decomposition of the total error into training and test errors~\citep[see, e.g.,][Chapter 26]{shalev2014understanding}.
%%%%%%%%%%%%%%%%%%%%%%%%%%%%
%%%%%%%%%%%%%%%%%%%%%%%%%%%%
%%%%%%%%%%%%%%%%%%%%%%%%%%%%
\subsubsection*{Proof of \cref{proposition:e-DDM-ratio}}
Define $\nu(\zeta)$ as the probability that the agent chooses $\bx$ over $\by$, conditioned on starting at $\zeta$, i.e.,
\begin{equation}
\nu(\zeta):= p\left (z(\bx,\by) = 1|W_0=\zeta;\bx,\by \right).    
\end{equation}
 We now use the following identity~\citep[][Theorem 4.2]{taylor_karlin_stochastic}:
\begin{equation}
\mathbb{E} \left [ t(\bx,\by) |W_0 = \zeta \right] = \frac{1}{v(\bx,\by;\bw^*)}  \left( b (2 \nu(\zeta)-1) - \zeta \right).    
\end{equation}
Note that $2\nu(\zeta) - 1$ is, in fact, equal to $\mathbb{E} \left[ z(\bx, \by) \mid W_0 = \zeta \right]$. This, together with the fact that $\pi_0(\cdot)$ has zero mean, completes the proof. 

%%%%%%%%%%%%%%%%%%%%%%%%%%%%
%%%%%%%%%%%%%%%%%%%%%%%%%%%%
%%%%%%%%%%%%%%%%%%%%%%%%%%%%
\subsection{Proof From Section~\ref{sec:LNR}} \label{sec:proofsrace}
%%%%
\subsubsection*{Proof of \cref{proposition:LNR-ratio}}
To simplify the notation throughout the proof, we define $\nux:=D_0-\nu(\bx;\bw^*)$ and $\nuy:=D_0-\nu(\by;\bw^*)$. We start by noticing that
\begin{equation}\label{eqn:time-LNR-dist}
(\taux, \tauy) \sim \text{Lognormal} \left( 
\begin{bmatrix} \nux \\ \nuy \end{bmatrix},
\begin{bmatrix}
2 & \rho \\
\rho & 2
\end{bmatrix}
\right).
\end{equation}
As a consequence, the distribution of $\taux-\tauy$ is Lognormal$\left( \nux - \nuy, 4-2\rho \right)$. This implies
\begin{equation}
p(z=-1;\bx,\by) = \mathbb{P}(\taux-\tauy > 0) = 1-\Phi \left( \frac{\nuy - \nux}{\sqrt{4-2\rho}} \right),    
\end{equation}
which establishes \eqref{eqn:expected-z-LNR}. Next, we show \eqref{eqn:expected-t-LNR}. Let us define $\htaux:=\log(\taux)$ and $\htauy:=\log(\tauy)$. Notice that the PDF of $(\htaux, \htauy)$ is given by
%%%
\begin{equation}
\frac{1}{2\pi\sqrt{4-\rho^2}} \exp \left( 
-\frac{(\htaux-\nux)^2 + (\htauy-\nuy)^2}{4-\rho^2} + \frac{\rho}{4-\rho^2}(\htaux-\nux)(\htauy-\nuy)
\right).    
\end{equation}
Therefore, we have
\begin{subequations} \label{eqn:ex-t-integrals-LNR}
\begin{align} 
\mathbb{E}&[t(\bx,\by)] \nonumber \\
& = \frac{1}{2\pi\sqrt{4-\rho^2}} \int_{-\infty}^\infty \int_{-\infty}^{\htaux} \exp \left(\htauy 
-\frac{(\htaux-\nux)^2 + (\htauy-\nuy)^2}{4-\rho^2} + \frac{\rho}{4-\rho^2}(\htaux-\nux)(\htauy-\nuy)
\right) ~d\htauy d\htaux  \label{eqn:ex-t-integrals-LNR-1} \\
& + \frac{1}{2\pi\sqrt{4-\rho^2}} \int_{-\infty}^\infty \int_{\htaux}^\infty \exp \left(\htaux 
-\frac{(\htaux-\nux)^2 + (\htauy-\nuy)^2}{4-\rho^2} + \frac{\rho}{4-\rho^2}(\htaux-\nux)(\htauy-\nuy)
\right)~d\htauy d\htaux.\label{eqn:ex-t-integrals-LNR-2} 
\end{align}
\end{subequations}
We calculate the first integral \eqref{eqn:ex-t-integrals-LNR-1}, and the second one \eqref{eqn:ex-t-integrals-LNR-2} can be computed similarly. Note that the term inside the exponential in \eqref{eqn:ex-t-integrals-LNR-1} can be recast as
{\small
\begin{align} \label{eqn:recast-exponent}
& \htauy -\frac{(\htaux-\nux)^2 + (\htauy-\nuy)^2}{4-\rho^2} + \frac{\rho}{4-\rho^2}(\htaux-\nux)(\htauy-\nuy) \nonumber \\
&= -\frac{1}{4-\rho^2} \left( \htauy-\nuy-\frac{4-\rho^2}{2} - \frac{\rho}{2}(\htaux-\nux) \right)^2 - \frac{(\htaux-\nux)^2}{4-\rho^2}+\frac{4-\rho^2}{4} + \frac{\rho^2}{4(4-\rho^2)}(\htaux-\nux)^2 + \nuy + \frac{\rho(\htaux-\nux)}{2} \nonumber \\
& = -\frac{1}{4-\rho^2} \left( \htauy-\nuy-\frac{4-\rho^2}{2} - \frac{\rho}{2}(\htaux-\nux) \right)^2 - \frac{(\htaux-\nux)^2}{4}+\frac{4-\rho^2}{4} + \nuy + \frac{\rho(\htaux-\nux)}{2}
\end{align}}
Also, note that we have
\begin{align} \label{eqn:squared-term}
& \frac{1}{\sqrt{\pi(4-\rho^2)}} \int_{-\infty}^{\htaux} \exp \left( -\frac{1}{4-\rho^2} \left( \htauy-\nuy-\frac{4-\rho^2}{2} - \frac{\rho}{2}(\htaux-\nux) \right)^2 \right)~d\htauy \nonumber  \\
& = \Phi \left( \frac{\sqrt{2} \left( \htaux(1-\rho/2) - \nuy - \frac{4-\rho^2}{2} +\nux \rho/2 \right)}{\sqrt{4-\rho^2}} \right).
\end{align}
Now, using \eqref{eqn:recast-exponent} along with \eqref{eqn:squared-term}, we derive that the integral in \eqref{eqn:ex-t-integrals-LNR-1} is equal to
{\small
\begin{align}
\frac{1}{2\sqrt{\pi}} \int_{-\infty}^\infty \exp \left( - \frac{(\htaux-\nux)^2}{4} + \frac{\rho(\htaux-\nux)}{2} +\frac{4-\rho^2}{4} + \nuy \right)  \Phi \left( \frac{ \htaux(1-\rho/2) - \nuy - \frac{4-\rho^2}{2} +\nux \rho/2}{\sqrt{(4-\rho^2)/2}} \right) d\htaux   
\end{align}}
which is equal to
\begin{align}
& \frac{1}{2\sqrt{\pi}} \int_{-\infty}^\infty \exp \left( - \frac{(\htaux-\nux - \rho)^2}{4} + \nuy +1 \right)  \Phi \left( \frac{ \htaux(1-\rho/2) - \nuy - \frac{4-\rho^2}{2} +\nux \rho/2}{\sqrt{(4-\rho^2)/2}} \right) d\htaux   \\
& = \exp(\nuy+1) \mathbb{E}_{\htaux \sim \mathcal{N}(\nux+\rho,2)} \left [ \Phi \left( \frac{ \htaux(1-\rho/2) - \nuy - \frac{4-\rho^2}{2} +\nux \rho/2}{\sqrt{(4-\rho^2)/2}} \right) \right] \\
& = \exp(\nuy+1) \Phi \left( \frac{\nux-\nuy+\rho-2}{\sqrt{4-2\rho}} \right),
\end{align}
where the last equality follows from the well-known identity for the expectation of the CDF of a normal distribution \citep{owen1980table}. Substituting this into \eqref{eqn:ex-t-integrals-LNR-1} and computing \eqref{eqn:ex-t-integrals-LNR-2} similarly completes the proof.
%%%%%%%%%%%%%%%%%%%%%%%%%%%%
%%%%%%%%%%%%%%%%%%%%%%%%%%%%
%%%%%%%%%%%%%%%%%%%%%%%%%%%%

\subsection{Proofs From Section~\ref{sec:perceptron}} \label{sec:proofhighprob}
%%%%%%%%%%
\subsubsection*{Proof of \cref{proposition:translate-to-high-probability}}
First note that, by \cref{theorem:linear_DDM}, we can find $\hat{\bw}$ such that
\begin{equation} \label{eqn:high-prob-proof-0}
\mathbb{E} \left[ \left \|\hat{\bw} - \bw^* \right \|_{\Sigma}^2 \right] \leq C_e:= \frac{8}{n+1} \sqrt{1+b^2} \left (\frac{d}{b^2} + 2 \right),   
\end{equation}
where $C_e$ is, in fact, order of $\mathcal{O}(d/(n\min\{b,b^2\}))$.

Next, note that
\begin{align}
& \mu \left( \left\{ (\bx,\by) : \text{sign}((\bx - \by)^\top \hat{\bw}) \neq  \text{sign}((\bx - \by)^\top \bw^*) \right \} \right) \nonumber \\
&  \leq
\mu \left( \left\{ (\bx,\by) : \Big | (\bx - \by)^\top (\hat{\bw} - \bw^*) \Big | \geq \gamma \right \} \right)
+ \mu \left( \left\{(\bx,\by): |(\bx-\by)^\top \bw^*| < \gamma \right\}\right) \label{eqn:high-prob-proof-1} \\
& \leq 
\frac{\left \|\hat{\bw} - \bw^* \right \|_{\Sigma}^2}{\gamma^2}
+ \mu \left( \left\{(\bx,\by): |(\bx-\by)^\top \bw^*| < \gamma \right\}\right),
\end{align}
where the last inequality follows from  Chebyshev's inequality.
Hence, to obtain the desired result, we need to have
\begin{equation} \label{eqn:high-prob-proof-2}
\left \|\hat{\bw} - \bw^* \right \|_{\Sigma}^2 \leq \gamma^2 \varepsilon.     
\end{equation}
Now, notice that, using Markov's inequality, the probability of \eqref{eqn:high-prob-proof-2} not holding is bounded by 
\begin{equation}
\frac{\mathbb{E} \left[ \left \|\hat{\bw} - \bw^* \right \|_{\Sigma}^2 \right]}{\gamma^2 \varepsilon} \leq \frac{C_e}{\gamma^2 \varepsilon}.    
\end{equation}
Therefore, substituting \eqref{eqn:high-prob-proof-0}, we establish that by having
\begin{equation} \label{eqn:n-for-constant-prob}
n \geq \frac{80}{\gamma^2 \varepsilon} \sqrt{1+b^2} \left (\frac{d}{b^2} + 2 \right),  
\end{equation}
the model $\hat{\bw}$ would be an $(\tilde{\varepsilon}, 0.1)$-estimator. 

To improve this to an $(\tilde{\varepsilon}, \delta)$-estimator, suppose we run the algorithm in \cref{theorem:linear_DDM} to achieve an error $\varepsilon/4$ instead, and then repeat this $k$ times (over $k$ independent batches of data, for some $k$ to be chosen later). In total, this would require a sample size that is $4k$ times \eqref{eqn:n-for-constant-prob}, resulting in independent models $\hat{\bw}_1, \dots, \hat{\bw}_k$. Next, for any pair of alternatives $(\bx, \by)$, we output the majority sign among $\{(\bx - \by)^\top \hat{\bw}_j\}_{j=1}^k$, denoted by $h_m(\bx,\by)$. 
Similar to \eqref{eqn:high-prob-proof-1}, we can write
\begin{align}
& \mu \left( \left\{ (\bx,\by) : h_m(\bx,\by) \neq  \text{sign}((\bx - \by)^\top \bw^*) \right \} \right) \nonumber \\
&  \leq
\mu \left( \left\{ (\bx,\by) : \sum_{j=1}^k \mathbbm{1} \left( \left | (\bx - \by)^\top (\hat{\bw}_j - \bw^*) \right | \geq \gamma  \right) \geq \frac{k}{2}
\right \} \right)
+ \mu \left( \left\{(\bx,\by): |(\bx-\by)^\top \bw^*| < \gamma \right\}\right). \label{eqn:high-prob-proof-3}
\end{align}
Hence, our goal is to bound the first term on the right-hand side. Next, note that what we showed earlier essentially implies that, for each $j$, with probability $0.9$, we have
\begin{equation} \label{eqn:high-prob-proof-4}
\mu \left( E_j \right) \leq \frac{\varepsilon}{4}, \text{ with } 
E_j:=\left\{ (\bx,\by) : \Big | (\bx - \by)^\top (\hat{\bw}_j - \bw^*) \Big | \geq \gamma \right \}.
\end{equation}
We call an estimator \textit{good} if condition \eqref{eqn:high-prob-proof-4} holds for it. Each $\hat{\bw}_j$ is a good estimator with probability $0.9$. By Hoeffding’s inequality, the probability that at least one fourth of the estimators are not good is bounded by $\exp(-2k(0.15)^2)$. Therefore, by setting $k = 23 \log(1/\delta)$, we have that with probability at least $1 - \delta$, at least three quarters of the estimators are good. Now, condition on such an event, we have
\begin{align}
\sum_{j=1}^k \mathbbm{1} \left( \left | (\bx - \by)^\top (\hat{\bw}_j - \bw^*) \right | \geq \gamma  \right) &\leq 
\# \text{ of bad estimators } + \sum_{j: \hat{\bw}_j \text{ is good}} \mathbbm{1} \left( (\bx,\by) \in E_j  \right) \\
&\leq \frac{k}{4} + \sum_{j: \hat{\bw}_j \text{ is good}} \mathbbm{1} \left( (\bx,\by) \in E_j  \right).
\end{align}
This implies 
\begin{align}
& \mu \left( \left\{ (\bx,\by) : \sum_{j=1}^k \mathbbm{1} \left( \left | (\bx - \by)^\top (\hat{\bw}_j - \bw^*) \right | \geq \gamma  \right) \geq \frac{k}{2}
\right \} \right) \nonumber \\
& \quad  \leq
\mu \left( \left\{ (\bx,\by) : \sum_{j: \hat{\bw}_j \text{ is good}} \mathbbm{1} \left( (\bx,\by) \in E_j  \right) \geq \frac{k}{4}
\right \} \right) \nonumber \\
& \quad \leq \frac{\sum_{j: \hat{\bw}_j \text{ is good}} \mu(E_j)}{k/4} \label{eqn:high-prob-proof-5} \\
&\quad \leq \varepsilon,
\end{align}
where \eqref{eqn:high-prob-proof-5} follows from Markov's inequality. This completes the proof. 
%%%%%%%%%%

\bibliography{references}

\newpage
\appendix
\section{Omitted Proofs}
%%%%%%%%%%%%%%%%%%%%%%%%%%%%%%%%%
%%%%%%%%%%%%%%%%%%%%%%%%%%%%%%%%%
\subsection{Proof of \cref{lemma:generalization-DDM}}
Recall from \eqref{eqn:dif-L} in the proof of \cref{proposition:minimizer} that
\begin{equation} \label{eqn:corollary-DDM-proof-1}
\mathcal{L}(\bw) - \mathcal{L}(\bw^*) = \frac{1}{2} ~ \mathbb{E}_{(\bx,\by)\sim \mu} \left [
\mathbb{E}[t(\bx,\by)] \left (\frac{v(\bx,\by;\bw)}{b} - \frac{v(\bx,\by;\bw^*)}{b} \right)^2
\right].
\end{equation}
Next, note that, similar to the derivation of \eqref{eqn:lower-bound-Etime} in the proof of \cref{theorem:linear_DDM}, we can establish
\begin{equation} 
\mathbb{E}[t(\bx,\by)] \geq \frac{b^2}{\sqrt{1+b^2K^2}}.
\end{equation}
Using this inequality along with the assumption yield
\begin{equation} \label{eqn:corollary-DDM-proof-2}
\mathcal{L}(\bw) - \mathcal{L}(\bw^*) \geq \frac{1}{2\sqrt{1+b^2K^2}} ~ \|\bw-\bw^*\|_{*}^2.
\end{equation}
%%%%%%%%%%%%%%%%%%%%%%%%%%%%%%%%%
%%%%%%%%%%%%%%%%%%%%%%%%%%%%%%%%%
\subsection{Proof of \cref{lemma:E-DDM-joint-pdf}}
Let $\varphi(\cdot; v)$ denote the probability density function of $T^b$. Note that we have:
\begin{align*}
f(t, z(\bx,\by)=1;\bx,\by) &= 
\varphi(T^b=t, T_{+}^b < T_{-}^b;v) \\
& = \int_{-b}^b \pi_0(\zeta) \varphi(T^b=t, T_{+}^b < T_{-}^b ~|~ W_0 = \zeta;v) ~d \zeta \\
& = \int_{-b}^b \pi_0(\zeta) \varphi(T^{b-\zeta}=t, T_{+}^{b-\zeta} < T_{-}^{b+\zeta} ~|~ W_0 = 0;v) ~d \zeta \\
&= \int_{-b}^b \pi_0(\zeta) \exp\left ( (b-\zeta)v - v^2/t \right) \varphi(T^{b-\zeta}=t, T_{+}^{b-\zeta} < T_{-}^{b+\zeta} ~|~ W_0 = 0;0) ~d \zeta,
\end{align*}
where the last equation follows from Girsanov’s theorem. Finally, we substitute the expression for $\varphi(T^{b - \zeta} = t,, T_{+}^{b - \zeta} < T_{-}^{b + \zeta} \mid W_0 = 0;, 0)$, which corresponds to the driftless standard Brownian motion starting at zero. This result is provided in Section 2.8 of \cite{karatzas1991brownian}, completing the proof.
%%%%%%%%%%%%%%%%%%%%%%%%%%%%%%%%%
%%%%%%%%%%%%%%%%%%%%%%%%%%%%%%%%%
%%%%%%%%%%%%%%%%%%%%%%%%%%%%%%%%%
%%%%%%%%%%%%%%%%%%%%%%%%%%%%%%%%%
\subsection{Proof of \cref{lemma:BM_distributions}}
For the proof of the first part, as well as both expectation results, see Chapter VIII.4 in \cite{taylor_karlin_stochastic}. We show the second part for the case of $z=1$ as the other case follows similarly. Note that we have
\begin{align}
f(t|z(\bx,\by) = 1;\bx,\by) &= \frac{f(t,z(\bx,\by) = 1;\bx,\by)}{p(1;\bx,\by)} \nonumber \\
& = \left ( 1+\exp(-2bv(\bx,\by;\bw^*)) \right) f(t,z(\bx,\by) = 1;\bx,\by). \label{eqn:BM_drift_1} 
\end{align}
Next, using Girsanov's theorem, we have
\begin{equation} \label{eqn:BM_drift_2}
f(t,z(\bx,\by) = 1;\bx,\by) = \exp \left (bv(\bx,\by;\bw^*) - \frac{v(\bx,\by;\bw^*)^2}{2}t  \right) \varphi(T^b=t|T_+^b < T_{-}^b),
\end{equation}
where $\varphi(\cdot)$ is the PDF of $T^b$ for the case of standard Brownian motion, i.e., $v(\bx,\by;\bw^*)=0$.

Finally, note that, for the standard Brownian motion, due to its symmetry, we have
\begin{equation} \label{eqn:BM_drift_3}
\varphi(T^b=t|T_+^b < T_{-}^b) = \frac{1}{2} \varphi(t).    
\end{equation}
Plugging \eqref{eqn:BM_drift_3} into \eqref{eqn:BM_drift_2}, and then substituting the result into \eqref{eqn:BM_drift_1}, completes the proof of the second part. The derivation of $\varphi(t)$ is provided in Section 2.8 of \cite{karatzas1991brownian}.
%%%%%%%%%%%%%%%%%%%%%%%%%%%%%%%%%
%%%%%%%%%%%%%%%%%%%%%%%%%%%%%%%%%
\subsection{Proof of \cref{lemma:laplace_DDM}}
To simplify the notation, we define $v := (\bx-\by)^\top \bw^*$ as the drift parameter for the DDM model. Note that, by \cref{lemma:BM_distributions}, we have
\begin{align}
\mathbb{E}[\exp(-\alpha T^b) ~|~ T_+^b < T_{-}^b]   
&= \int_{0}^\infty \exp(-\alpha t) \cosh(bv) \exp \left(-\frac{v^2}{2}t \right) \varphi(t) dt \\
& = \cosh(bv) \int_{0}^\infty \exp \left (-\left(\alpha+\frac{v^2}{2} \right) t \right ) \varphi(t) dt, \label{eqn:laplace_transform_1}
\end{align}
where the last integral can be interpreted as the Laplace transform of standard Brownian motion with zero drift, and is equal to
\begin{equation}
\frac{1}{\cosh \left(b\sqrt{2\alpha + v^2}\right)}.    
\end{equation}
Plugging this into \eqref{eqn:laplace_transform_1} completes the proof of this lemma.
%%%%%%%%%%%%%%%%%%%%%%%%%%%%%%%%%
%%%%%%%%%%%%%%%%%%%%%%%%%%%%%%%%%

\end{document}